\title{Pricing options on flow forwards by neural networks in Hilbert space}
\author{Fred Espen Benth, Nils Detering, Luca Galimberti}
\address{Fred Espen Benth \\
University of Oslo\\
Department of Mathematics \\
P.O. Box 1053, Blindern\\
N--0316 Oslo, Norway}
\email[]{fredb\@@math.uio.no}
\address{Nils Detering \\ 
University of California at Santa Barbara\\
Department of Statistics and Applied Probability\\
CA 93106 Santa Barbara, USA}
\email[]{detering\@@pstat.ucsb.edu}
\address{Luca Galimberti \\ 
Norwegian University of Science and Technology\\
Department of Mathematical Sciences\\
Sentralbygg 2, Gl\o shaugen, Trondheim, Norway}
\email[]{luca.galimberti\@@ntnu.no}
\newtheorem{theorem}{Theorem}[section]
\newtheorem{definition}[theorem]{Definition}
\newtheorem{lemma}[theorem]{Lemma}
\newtheorem{proposition}[theorem]{Proposition}
\newtheorem{remark}[theorem]{Remark}
\newtheorem{example}[theorem]{Example}
\newcommand{\R}{\mathbb R}
\newcommand{\N}{\mathbb N}
\renewcommand{\P}{\mathbb P} 
\newcommand{\norm}[1]{\left\lVert#1\right\rVert}
\newcommand{\abs}[1]{\left |#1\right|}
\newcommand{\X}{\mathfrak X}
\DeclareMathOperator{\Span}{span}
\DeclareMathOperator{\Supp}{supp}
\newcommand{\luca}[1]{\textcolor{orange}{#1}}
\date{\today}
\thanks{Fred  Espen  Benth  acknowledges  support  from  SPATUS,  a  Thematic  Research  Group funded by UiO:Energy. \\
Luca Galimberti has been supported in part by the grant Waves and Nonlinear Phenomena (WaNP) from the Research Council of Norway.
}
\begin{document}

\maketitle
\begin{abstract}
 We propose a new methodology for pricing options on flow forwards by applying infinite-dimensional neural networks. We recast the pricing problem as an optimization problem in a Hilbert space of real-valued function on the positive real line, which is the state space for the term structure dynamics. This optimization problem is solved by facilitating a novel feedforward neural network architecture designed for approximating continuous functions on the state space. The proposed neural net is built upon the basis of the Hilbert space. We provide an extensive case study that shows excellent numerical efficiency, with superior performance over that of a classical neural net trained on sampling the term structure curves. 
\end{abstract}
\section{Introduction}

In commodity markets, options are typically written on forward and futures contracts (see e.g. Geman \cite{Geman}). In some markets, like for example electricity and gas, as well as freight and weather markets on temperature and wind, the forwards deliver the underlying commodity or service over a contracted delivery period, and not at a specified delivery time in the future (see e.g. Benth, \v{S}altyt\.e Benth and Koekebakker \cite{BSBK-book}). Such forwards are sometimes referred to as {\it flow forwards}.  Also, futures contracts on the Secured Overnight Financing Rate (SOFR) fall into this class of flow forwards, as they are written on the average of the underlying SOFR over contracted periods of time (see Skov and Skovmand \cite{SkovSkovmand}).

There is a large literature on neural networks and financial derivatives, mostly focusing on approximating option prices. We refer to the recent survey by Ruf and Wang \cite{RufWang} for an extensive historical account of the various papers and their results and methodology. As is well-known in mathematical finance, option prices can be re-cast as solutions of partial differential equations based on Feynman-Kac formulae for diffusion processes (see e.g., Karatzas and Shreve \cite{KS} and Bj\"ork \cite{bjork}). This connection has been utilised in, say, Beck {\it et al.} \cite{Beck2021, beck2021overview}, Han, Jentzen and E \cite{Han8505} and Hutzenthaler {\it et al.} \cite{HJKN}
in studying deep neural network approximations. Gonon and Schwab \cite{Gonon2021} have in a recent paper extended this theory to a study of expression rates for deep neural network approximations of option prices on L\'evy-based asset prices. In all these papers, the main argument for introducing deep neural networks is to overcome the curse of dimensionality. Thus, deep neural networks can be applied to price options on a high-dimensional underlying, like basket options, say. 

In this paper we bring this to the "ultimate" high-dimensional case, considering deep neural networks approximating option prices on {\it infinite-dimensional} underlyings. Indeed, option prices on flow forwards are in general functions of functions, as the underlying will be a curve (i.e., the term structure) rather than a vector of points (i.e., prices of the underlying assets). We propose to approximate this non-linear option price functional by a neural network in Hilbert space.       


Facilitating neural network in the context of flow forwards has first been proposed in Benth, Detering and Lavagnini \cite{BDL2021} where a classical neural network has been used for calibrating parameters.
Motivated by the approach of Beck {\it et al.} \cite{Beck2021}, we generalise the option pricing problem to deep neural networks in Hilbert space, appealing to the general neural nets in Fr\'echet space and their universal approximation of continuous mappings developed in Benth, Detering and Galimberti \cite{benth2021neural}.  
We also mention in passing the recent interest in applying neural networks to compute the implied volatility in connection with rough volatility models, see Bayer and Stemper \cite{BayerStemper} and  Bayer {\it et al.} \cite{Horvath}. Rough volatility models are non-Markovian, however, can be viewed as infinite-dimensional objects. We refer to Benth, Eyjolfsson and Veraart \cite{BEV-SIFIN} for more on this latter perspective for general Volterra processes and pricing in energy markets.   

To be more specific about the problem we are dealing with and the motivation for our approach, we discuss briefly options on power. In electricity markets, such as the European Energy Exchange (EEX) and Nord Pool one can trade in call and put options written on forward contracts delivering power over a contracted period of time. I.e., for a contracted delivery period $[T_1,T_2]$, where $0\leq T_1<T_2$, we denote the price at time $t\leq T$ by $\hat{F}(t,T_1,T_2)$. The delivery period for forward contracts underlying an option are typically a month, say July, so that for example $T_1=July1$ and $T_2=July 31$. For a strike price $\mathcal K$ and exercise time $\tau\leq T_1$, the price of a call option at time $t\leq\tau$ is defined as
\begin{equation}\label{option:price}
    V(t,\tau)=\mathbb E[\max(\hat{F}(\tau,T_1,T_2)-\mathcal K,0)\,\vert\,\mathcal F_t]
\end{equation}
Here, we have chosen $r=0$, and assume that we are working directly under the martingale measure $\mathbb Q$ so that $V(t,\tau)$ is the arbitrage-free price. In the market place, a usual exercise time is $\tau=T_1$, and the strikes of the most liquid options are close to current forward prices. In this paper we shall not make any distinction between forwards and futures, and focus solely on the forward contracts. 

One would like to have a model for the forward prices across different delivery periods which avoids calendar arbitrage, that is, arbitrage from investing in forwards with different delivery periods. As discussed in Benth and Koekebakker \cite{BK}, the most convenient way to do this is via fixed-delivery forwards. To this end, assume that $F(t,T)$ is the price of a forward at time $t$ delivering the underlying commodity at time $T\geq t$. Of course, in power markets, such a forward only makes sense as a modeling device. The following relationship holds then true (see Benth, \v{S}altyt\.e Benth and Koekebakker \cite[Prop. 4.1.]{BSBK-book}):
\begin{equation}
\label{forward:price}
    \hat{F}(t,T_1,T_2)=\frac1{T_2-T_1}\int_{T_1}^{T_2}F(t,T)dT
\end{equation}
The forward price is denoted in power markets {\it per MWh}, which is the reason for dividing by the length of delivery, i.e., defining $\hat{F}$ as the average of $F$ over the delivery period.

If one now has a dynamic model for the forward curve $F(t,\cdot)$, then prices for related forwards and options can be obtained via the relationship (\ref{option:price}) and (\ref{forward:price}). Dynamic models for the forward curve have been proposed and analysed in, say, Clewlow and Strickland \cite{CS}, Benth and Koekebakker \cite{BK} and Benth and Kr\"uhner \cite{BK-CIMS}.
Such a dynamic is usually defined in terms of some stochastic partial differential equation (SPDE). However, while models for the forward curve lead to coherent arbitrage free prices across options with different delivery periods, they pose computational challenges because of their infinite dimensional nature. 

This paper tries to overcome some of these challenges by proposing a numerical method for pricing options on flow forwards based on neural networks. We first derive properties of the pricing function which ensure that one can actually recast the pricing problem (\ref{option:price}) into an optimization problem over a space of continuous functions defined on a Hilbert space of functions. We then show by a density argument that it is actually sufficient to optimize only over a restricted set of continuous functions, namely Hilbert space neural networks. These neural networks have been proposed in \cite{benth2021neural} for approximating functionals defined on a Fr\'echet space $\X$. We show how one can efficiently implement these neural networks by standard machine learning packages as TensorFlow/Keras and use the stochastic gradient descent algorithm for the optimization task. Our method delivers option prices automatically for a wide range of initial market conditions. This has the tremendous advantage that expensive simulations are performed only once for the training step of the neural network, and they do not have to be repeated if market conditions change. 

We test our methodology in some numerical case studies, and find that it works very well with high dimensional noise which is in line with the general perception that numerical methods based on neural networks can often overcome the course of dimensionality. In the case studies, we also compare our approach with the direct classical neural network approximation where the term structure curves are simply sampled and turned into high-dimensional input objects. Numerical evidence talks strongly in favour of our infinite-dimensional network in Hilbert space as being superior. Using the infinite-dimensional methodology we propose in this paper is also advantageous when approximating the Greeks of the option. As the network approximation is in terms of interpretable factors, like level, slope and curvature of the term structure, (numerical or analytical) differentiation of the network easily captures sensitivities with respect to these. Using a classical network trained by sampling, it is not easy to obtain approximations of the Greeks from the neural net. 

The outline of this paper is as follows: In Section~\ref{nn:frechet} we present and adapt results from \cite{benth2021neural} for the Hilbert spaces one typically deals with when pricing flow derivatives. In Section~\ref{sec:flow-forwards} we state two possible dynamics for the flow forward curve. In Section~\ref{sec:option} we derive required continuity properties of the pricing functional in both models. In Section~\ref{sec:approx:price:fct} we study the related optimization problem. Finally, Section~\ref{sec:case:study} contains an extensive numerical case study.

\section{An introduction to feedforward neural networks on Hilbert spaces}\label{nn:frechet}

As anticipated in the Introduction, our strategy is based on neural networks in infinite dimensional vector spaces $\X$ and on an abstract approximation result for continuous functions $f:\X\to \R$; these kind of architectures have been introduced in the paper Benth, Detering and Galimberti \cite{benth2021neural} with $\mathfrak{X}$ being a Fr\'echet space. Here we are going to briefly outline their definition and their most salient properties.

We recall that the classical universal approximation theorem shows that any continuous function from $\mathbb{R}^n$ to $\mathbb{R}$ can be approximated arbitrary well with a one layer neural network. More precisely, for a fixed continuous function $\sigma : \mathbb{R}\rightarrow \mathbb{R}$ and $a \in \mathbb{R}^n, \ell , b \in \mathbb{R}$, a {\em neuron} is a function $\mathcal{N}_{\ell , a,b}\in C(\mathbb{R}^n ; \mathbb{R})$ defined by $x\mapsto \ell \sigma (a^{\top}x +b) $. The universal approximation theorem then states conditions on the {\em activation function} $\sigma$ such that the linear space of functions generated by the neurons 
$$\mathfrak N(\sigma):=\Span \{ \mathcal{N}_{\ell , a,b}; \ell ,b \in \mathbb{R} , a\in \mathbb{R}^n \} $$ is dense with respect to the topology of uniform convergence on compacts. This means that for every $f\in C(\mathbb{R}^n; \mathbb{R})$ and compact subset $K\subset \mathbb{R}^n$ and a given $\varepsilon >0$, there exists $N\in \mathbb{N}$ and $\ell_i ,b_i \in \mathbb{R} , a_i\in \mathbb{R}^n$ for $i=1,\dots , N$ such that
$$ \sup_{x\in K}\abs{f(x) - \sum_{i=1}^{N} \mathcal{N}_{\ell_i , a_i,b_i}  (x)}< \varepsilon .$$

For the sake of concreteness (but we remark that the following results are proved in \cite{benth2021neural} in much wider generality) let $(H,\langle\cdot,\cdot\rangle)$ now be an arbitrary real Hilbert space (not necessarily separable at this stage), where $\langle\cdot,\cdot\rangle$ denotes its scalar product. Its topological dual $H^*$ can be identified via Riesz's isomorphism with $H$ itself, and therefore in the sequel with a slight abuse of notation the symbol $\langle\cdot,\cdot\rangle$ will denote both the scalar product of $H$ and the canonical pairing between $H^*$ and $H$.

In order to define the infinite-dimensional analogue of a neuron, $a^{\top}x +b$ is replaced by an affine function on $H$, the activation function $\sigma : \mathbb{R} \rightarrow \mathbb{R}$ by a function in $C(H;H)$, and the scalar $\ell$ by a continuous linear form.
For $\ell\in H^*, A\in \mathcal{L} (H), b\in H$ a neuron, $\mathcal{N}_{\ell,A,b}$ is then defined by 
$$\mathcal{N}_{\ell,A,b}(x)= \langle \ell ,\sigma (Ax +b)\rangle$$
and one asks for conditions on $\sigma:H \rightarrow H$ that ensure that $\mathfrak N(\sigma):=\Span \{ \mathcal{N}_{\ell , A,b}; \ell\in H^* ,A\in \mathcal{L} (H),b \in H  \} $ is dense in $C(H;\mathbb{R})$ under some suitable topology.

In \cite{benth2021neural}, the following separating property for the activation function $\sigma$, which can be seen as the infinite-dimensional counterpart to the well known sigmoidal property for functions from $\R$ to $\R$ (see \cite{Cybenko1989}), was introduced:
\begin{definition}{Separating property from Benth, Detering and Galimberti \cite{benth2021neural}:}\label{sigmoid}
There exist $\psi\in H^*\setminus\{0\}$ and $u_+,u_-,u_0\in H$ such that either $u_+ \notin \Span \{u_0,u_-\}$ or $u_- \notin \Span \{u_0,u_+ \}$ and such that 
\begin{equation}\label{eq: abstract condition on sigma}
\begin{cases}
\lim_{\lambda\to\infty} \sigma(\lambda x) = u_+, \text{ if } x\in \Psi_+\\
\lim_{\lambda\to\infty} \sigma(\lambda x) = u_-, \text{ if } x\in \Psi_-\\
\lim_{\lambda\to\infty} \sigma(\lambda x) = u_0, \text{ if } x\in \Psi_0\\
\end{cases}
\end{equation}
where we have set 
\begin{equation*}
    \Psi_+ =\{ x\in H; \langle\psi,x\rangle >0 \}, \quad \Psi_- =\{ x\in H; \langle\psi,x\rangle <0 \}
\end{equation*}
and $\Psi_0=\ker(\psi)$.
\end{definition}
We point out that as a particular case of the separating property one may choose $u_0=u_-=0$ and $u_+\neq 0$ for instance.

The following result shows the density of $\mathfrak N(\sigma)$ if the activation function $\sigma$ satisfies the separating property. 

\begin{theorem}{(Adapted from Benth, Detering and Galimberti \cite[Thm. 2.3 and 2.8]{benth2021neural})}\label{prop: density}
Let $(H,\langle\cdot,\cdot\rangle)$ be a real Hilbert space, and let $\sigma:H\to H$ be continuous, satisfying \eqref{eq: abstract condition on sigma} and with bounded range $\sigma(H)\subset H$. 
Then $ \mathfrak N(\sigma)$ is dense in $C(H;\mathbb R)$ when equipped with the topology of uniform convergence on compacts. In other words, given $f\in C(H;\mathbb R)$, then, for any compact subset $K$ of $H$, and any $\varepsilon>0$, there exists $ \sum_{m=1}^M \mathcal{N}_{\ell_m,A_m,b_m}\in  \mathfrak N(\sigma)$ with suitable $ \ell_m\in H^*,A_m\in\mathcal{L}(H)$ and $b_m\in H$ such that
\begin{equation*}\label{approx:prop}
    \sup_{x\in K}\abs{f(x) - \sum_{m=1}^M  \mathcal{N}_{\ell_m,A_m,b_m}(x)}  < \varepsilon.
\end{equation*}
\end{theorem}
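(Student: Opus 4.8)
The plan is to prove the statement by a Hahn--Banach duality argument in the spirit of Cybenko's classical proof, transported to the Hilbert space setting. First I would fix a compact set $K\subset H$ and note that, since every neuron restricts to an element of $C(K;\mathbb R)$, it suffices to show that the linear span $\mathfrak N(\sigma)|_K$ is dense in $C(K;\mathbb R)$ for the uniform norm. By Hahn--Banach together with the Riesz representation theorem (applicable since $K$ is compact metric), density fails only if there is a nonzero finite signed Radon measure $\mu$ on $K$ annihilating every neuron, i.e. $\int_K \langle\ell,\sigma(Ax+b)\rangle\,d\mu(x)=0$ for all $\ell\in H^*$, $A\in\mathcal L(H)$, $b\in H$. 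Everything then reduces to showing that any such $\mu$ must vanish.

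The key device is to feed scaled parameters into the separating property. Replacing $A$ by $\lambda A$ and $b$ by $\lambda b$ and letting $\lambda\to\infty$, the integrand $\langle\ell,\sigma(\lambda(Ax+b))\rangle$ converges pointwise, by \eqref{eq: abstract condition on sigma}, to $\langle\ell,u_+\rangle$, $\langle\ell,u_-\rangle$ or $\langle\ell,u_0\rangle$ according to the sign of $\langle\psi,Ax+b\rangle$. Because $\sigma(H)$ is bounded, the integrands are dominated by the constant $\norm{\ell}\,\sup_{z}\norm{\sigma(z)}$, so dominated convergence lets me pass to the limit and obtain
\[
\langle\ell,u_+\rangle\,\mu(S_+)+\langle\ell,u_-\rangle\,\mu(S_-)+\langle\ell,u_0\rangle\,\mu(S_0)=0,
\]
where $S_+,S_-,S_0$ are the sets on which $\langle\psi,Ax+b\rangle$ is positive, negative, zero. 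I would then exploit the freedom in $A,b$: writing $\psi^\sharp\in H$ for the Riesz representative of $\psi$, the rank-one operator $A\colon x\mapsto \langle h,x\rangle\,\psi^\sharp/\norm{\psi^\sharp}^2$ and a suitable $b$ realise $\langle\psi,Ax+b\rangle=\langle h,x\rangle+c$ for arbitrary $h\in H$, $c\in\mathbb R$, so that $S_+$ ranges over all open half-spaces of $H$.

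Next I would use the linear independence hypothesis, which is precisely what disentangles the three limiting values. Assuming (without loss of generality) $u_+\notin\Span\{u_0,u_-\}$, I can pick $\ell\in H^*$ with $\langle\ell,u_+\rangle=1$ and $\langle\ell,u_-\rangle=\langle\ell,u_0\rangle=0$; the displayed identity then collapses to $\mu(\{x:\langle h,x\rangle+c>0\})=0$ for every $h\in H$ and $c\in\mathbb R$ (the symmetric hypothesis would isolate $S_-$ instead). Thus $\mu$ assigns zero mass to every open half-space. To conclude $\mu=0$, I would fix $h$ and consider the one-dimensional pushforward of $\mu$ under $x\mapsto\langle h,x\rangle$: it charges no half-line $(t,\infty)$ and is therefore the zero measure (by the usual distribution-function argument), whence $\int_K e^{\mathrm i s\langle h,x\rangle}\,d\mu(x)=0$ for all $s\in\mathbb R$; letting $h$ and $s$ vary shows the characteristic functional of $\mu$ vanishes identically. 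Since the real algebra generated by $\{\cos\langle y,\cdot\rangle,\sin\langle y,\cdot\rangle:y\in H\}$ separates points of $K$ and contains constants, Stone--Weierstrass makes it dense in $C(K;\mathbb R)$, and a measure annihilating a dense set is zero --- the desired contradiction.

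The step I expect to be the main obstacle is the passage from annihilation of neurons to annihilation of half-space indicators: one must simultaneously justify the interchange of limit and integral (this is exactly where boundedness of $\sigma(H)$ is used) and make the right choice of $\ell$, for which the independence of $u_+$ from $\Span\{u_0,u_-\}$ is indispensable, since without it the three limit contributions cannot be separated. The concluding measure-determination step is comparatively routine; the only mild care needed is that $\mu$ lives on a compact, hence metrizable and effectively separable, subset, so that the characteristic-functional uniqueness applies even when $H$ itself is non-separable.
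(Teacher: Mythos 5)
Your proposal is correct and follows essentially the same route as the paper's source proof: the theorem is only cited here, and in Benth, Detering and Galimberti (Thms.~2.3 and 2.8 of the referenced work) it is established by exactly this Cybenko-style Hahn--Banach/Riesz duality, followed by the scaling-plus-dominated-convergence reduction to half-space masses, the choice of $\ell$ enabled by $u_+\notin\Span\{u_0,u_-\}$ (or the symmetric alternative), and a one-dimensional pushforward/characteristic-functional argument showing the annihilating measure vanishes. Your remark that compactness of $K$ supplies the metrizability and effective separability needed when $H$ itself is non-separable is likewise the correct reading of why the argument goes through in this generality.
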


Moreover, the following result, ensuring that one can approximate a given abstract neural net arbitrary well via a neural network that is constructed from finite dimensional maps and that can thus be trained, is valid, as soon as one imposes that the Hilbert space $ H$ is separable: for an in-depth discussion about this condition, we refer to Benth, Detering and Galimberti \cite{benth2021neural}.  

\begin{proposition}{(Adapted from Benth, Detering and Galimberti \cite[Prop. 4.1]{benth2021neural})}\label{prop: finite dimensional approx, Banach}\label{finite:dim:projection:NN}
Let $(H,\langle\cdot,\cdot\rangle)$ be a real separable Hilbert space and let $(e_k)_{k\in\N}$ be a an orthonormal basis for $ H$. For each $N\in\N$ let
$$
 \Pi_N:  H \to \Span\{e_1,\dots ,e_N\}
$$
be the orthogonal projection on the first $N$ elements of the basis. Let $\sigma: H\to H$ be Lipschitz. Let $f\in C( H;\R)$, $K\subset H$ compact and $\varepsilon>0$. Assume
\begin{equation*}
    \mathcal N^{\epsilon} (x) = \sum_{j=1}^M\langle \ell_j,\sigma(A_jx+b_j)\rangle,\quad x\in H
\end{equation*}
with $\ell_j\in H^*,A_j\in\mathcal{L}( H)$ and $b_j\in H$ such that 
\begin{equation*}
    \sup_{x\in K}\abs{f(x)-\mathcal N^{\epsilon}(x)}<\varepsilon.
\end{equation*}
Fix $\delta>0$. Then there exists $N_\ast=N_\ast(\mathcal N^\epsilon,\delta)\in\N$ such that for $N\geq N_\ast$
\begin{equation}\label{approx:finite}
    \sup_{x\in K}\abs{f(x)-\sum_{j=1}^M\langle \ell_j\circ\Pi_N,\sigma(
    \Pi_{N}A_j\Pi_{N}x+\Pi_{N}b_j)\rangle}<\varepsilon+\delta.
\end{equation}
\end{proposition}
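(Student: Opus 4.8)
The plan is to reduce the statement to a uniform approximation of the abstract net $\mathcal{N}^\epsilon$ by its finite-dimensional projection. Writing $\mathcal{N}^\epsilon_N(x):=\sum_{j=1}^M\langle \ell_j\circ\Pi_N,\sigma(\Pi_N A_j\Pi_N x+\Pi_N b_j)\rangle$, the triangle inequality gives
\[
\sup_{x\in K}\abs{f(x)-\mathcal{N}^\epsilon_N(x)}\le \sup_{x\in K}\abs{f(x)-\mathcal{N}^\epsilon(x)}+\sup_{x\in K}\abs{\mathcal{N}^\epsilon(x)-\mathcal{N}^\epsilon_N(x)}.
\]
The first summand is already $<\varepsilon$ by hypothesis, so it suffices to produce $N_\ast$ such that the second is $<\delta$ for all $N\ge N_\ast$. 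Using self-adjointness of $\Pi_N$ to rewrite $\ell_j\circ\Pi_N$ as the Riesz-identified element $\Pi_N\ell_j$, I would estimate each summand by adding and subtracting $\langle \ell_j,\sigma(\Pi_N A_j\Pi_N x+\Pi_N b_j)\rangle$, thereby splitting the $j$-th term into a \emph{change-of-argument} part $\langle \ell_j,\sigma(A_jx+b_j)-\sigma(\Pi_N A_j\Pi_N x+\Pi_N b_j)\rangle$ and a \emph{change-of-functional} part $\langle (I-\Pi_N)\ell_j,\sigma(\Pi_N A_j\Pi_N x+\Pi_N b_j)\rangle$.

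For the change-of-argument part I would use Cauchy--Schwarz and the Lipschitz constant $L$ of $\sigma$ to bound it by $L\norm{\ell_j}\cdot\norm{A_jx+b_j-\Pi_N A_j\Pi_N x-\Pi_N b_j}$, and then decompose the inner vector as $(I-\Pi_N)b_j+(I-\Pi_N)A_jx+\Pi_N A_j(I-\Pi_N)x$, whence, using $\norm{\Pi_N}\le 1$,
\[
\norm{A_jx+b_j-\Pi_N A_j\Pi_N x-\Pi_N b_j}\le \norm{(I-\Pi_N)b_j}+\norm{(I-\Pi_N)A_jx}+\norm{A_j}\,\norm{(I-\Pi_N)x}.
\]
For the change-of-functional part, Cauchy--Schwarz yields the bound $\norm{(I-\Pi_N)\ell_j}\cdot\norm{\sigma(\Pi_N A_j\Pi_N x+\Pi_N b_j)}$; since $\sigma$ is Lipschitz, $\norm{\sigma(z)}\le\norm{\sigma(0)}+L\norm{z}$, and $\norm{\Pi_N A_j\Pi_N x+\Pi_N b_j}\le\norm{A_j}\sup_{x\in K}\norm{x}+\norm{b_j}$ is bounded uniformly in $N$ and in $x\in K$ because $K$ is compact, so this part is controlled by a uniform constant times $\norm{(I-\Pi_N)\ell_j}$.

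The key lemma, which I expect to be the only substantive step, is that the projections converge to the identity uniformly on compacta: $\sup_{y\in L}\norm{(I-\Pi_N)y}\to 0$ as $N\to\infty$ for every compact $L\subset H$. I would prove this from the pointwise convergence $\Pi_N y\to y$ --- valid because $(e_k)_{k\in\N}$ is an orthonormal basis --- together with $\norm{\Pi_N}\le 1$: given $\eta>0$, cover $L$ by finitely many balls centered at $y_1,\dots,y_p$, estimate $\norm{(I-\Pi_N)y}\le\norm{(I-\Pi_N)(y-y_i)}+\norm{(I-\Pi_N)y_i}\le\eta+\max_i\norm{(I-\Pi_N)y_i}$, and take $N$ large enough that the finite maximum falls below $\eta$. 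Applying this to $L=K$ and to the compact sets $L=A_jK$ (compact as continuous images of $K$), all the $x$-dependent terms above tend to $0$ uniformly over $x\in K$, while the $x$-independent terms $\norm{(I-\Pi_N)b_j}$ and $\norm{(I-\Pi_N)\ell_j}$ tend to $0$ by pointwise convergence. Summing the finitely many contributions $j=1,\dots,M$ gives $\sup_{x\in K}\abs{\mathcal{N}^\epsilon(x)-\mathcal{N}^\epsilon_N(x)}\to 0$, and choosing $N_\ast$ so that this is below $\delta$ for $N\ge N_\ast$ completes the proof. Here separability of $H$ enters precisely to supply the countable orthonormal basis underlying the $\Pi_N$, and hence the pointwise convergence that drives the compactness argument.
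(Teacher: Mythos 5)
Your proof is correct and follows essentially the same route as the paper's source: the paper gives no proof of its own, deferring to Prop.~4.1 of \cite{benth2021neural}, whose argument is precisely your combination of the triangle inequality, the Cauchy--Schwarz/Lipschitz splitting into a change-of-argument and a change-of-functional term, and the key lemma that $\Pi_N\to I$ uniformly on compact sets, proved there, as by you, via pointwise convergence plus a finite net. The only cosmetic remark is that in the net argument you should invoke $\norm{I-\Pi_N}\leq 1$ (true, since $I-\Pi_N$ is itself an orthogonal projection) rather than $\norm{\Pi_N}\leq 1$, or else accept a harmless factor of $2$ from $\abs{(I-\Pi_N)(y-y_i)}\leq 2\abs{y-y_i}$.
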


We mention that the function $\mathcal{N}^{\varepsilon}:  H\rightarrow \R$, which is required in the proposition above, exists for instance in view of Theorem \ref{prop: density}, as soon as one assumes additionally that $\sigma$ satisfies \eqref{eq: abstract condition on sigma} and has bounded range $\sigma( H)\subset  H$. 
Observe that if $\sigma$ is Lipschitz continuous, then every $g\in\mathfrak{N}(\sigma)$ is also Lipschitz continuous.
\begin{remark}
The terms appearing in the sum in (\ref{approx:finite}) can now easily be programmed in a computer. We see that for large $N$, it is sufficient to consider the finite dimensional input values $\Pi_N (x)$ instead of $x$, and then successively the restriction of the operators $\Pi_N A_j, \sigma$ and $\ell_j$ to $\Span\{e_1,\dots ,e_N\}$ instead of the maps $A_j, \sigma$ and $\ell_j$ for $j=1,\dots , M$. The maps $\Pi_N A_j, \sigma$ and $\ell_j$ are finite dimensional when restricted to $\Span\{e_1,\dots ,e_N\}$ and the sum above thus resembles a classical neural network. However, instead of the typical one dimensional activation function, the function $\Pi_N \circ \sigma$ restricted to $\Span\{e_1,\dots ,e_N\}$ is multidimensional. 
\end{remark}

We notice that the basis functions $\{e_i\}_{i\in\mathbb N}$ incorporate structural information about the objects in the Hilbert space. In applications, this may provide an advantage in training a neural net in Hilbert space rather than using classical sampling, and, moreover, it provides a net that is trained as functions on the basis expansion rather than on sampling the input functions. Furthermore, if the activation  function $\sigma$ is (Fr\'echet) differentiable, the neural network will be (Fr\'echet) differentiable, and we may derive analytic expressions for the sensitivities in directions along basis functions, say, avoiding numerical differentiation.

\section{Modeling the flow forward price dynamics}\label{sec:flow-forwards}

Let $(\Omega,\mathcal F,(\mathcal F_t),\mathbb Q)$ be a  filtered probability space satisfying the {\it usual conditions} (see Definition 2.25 in Karatzas and Shreve \cite{KS}). In our analysis of option prices, we assume right away that we work under the risk neutral probability $\mathbb Q$. 

Furthermore, as state space of our stochastic models for the flow forwards, let $H$ be a separable Hilbert space of measurable functions $f:\mathbb R_+\rightarrow\mathbb R$, with inner product $\langle\cdot,\cdot\rangle$ and induced norm $\vert\cdot\vert$.
We assume that the evaluation functionals
$\delta_{\xi}\in H^*$ for any $\xi\in\mathbb R_+$, with $\delta_{\xi}(f):=f(\xi)$, and that the family of (right-)shift operators
$(\mathcal S_t)_{t\in\mathbb R_+}$, with $\mathcal S_t f:=f(t+\cdot)$ forms a $C_0$-semigroup on $H$ which is quasi-contractive. Its densely defined generator is the derivative operator $\partial_{\xi}$.  Finally, $H$ is supposed to be a Banach algebra under pointwise multiplication of functions where the constant unit function $\xi\mapsto 1\in H$. We include the important example of the Filipovi\'c space next:  

\begin{example}\label{exp:Filipovic:space}
An example of a separable Hilbert space of measurable real-valued functions on $\mathbb R_+$ is given by the Filipovi\'c space, which was introduced in Filipovi\'c \cite{Filip}.
Denote by $H_w$ the space of absolutely continuous functions $x:\mathbb R_+\rightarrow \mathbb R$ satisfying the norm
$$
\vert x\vert_w^2=x(0)^2+\int_0^{\infty}w(\xi)x'(\xi)^2d\xi<\infty.
$$
The weight function $w:\mathbb R_+\rightarrow[1,\infty)$ is non-decreasing and measurable with $w(0)=1$. Under integrability of $w^{-1}$, the shift semigroup $(\mathcal S_t)_{t\geq 0}$ associated to the densely defined generator $\partial_{\xi}$ is a $C_0$-semigroup. Moreover, $H_w$ is a Banach algebra with respect to pointwise multiplication (see Prop. 4.18 in Benth and Kr\"uhner \cite{BK-CIMS}) supporting constant functions. Furthermore, the shift-semigroup is quasi-contractive (see Prop. 4.4 in Benth, Detering and Kr\"uhner \cite{BDK-stochastics}). 
\end{example}

\subsection{Stochastic modeling of fixed-delivery forwards}

Assume that $(X(t))_{t\geq 0}$ satisfies the stochastic partial differential equation (SPDE) 
\begin{equation}\label{SPDE}
    dX(t)=\partial_{\xi} X(t)dt+\alpha(t,X(t))dt+\eta(t,X(t))dW(t)+\int_{H}\gamma(t,X(t),z)\widetilde{N}(dt,dz)
\end{equation}
with $X(0):=X_0$ for an $\mathcal F_0$-measurable $H$-valued square integrable random variable. Here $W$ is a Wiener process in $H$ with a positive definite trace class (covariance) operator $\mathcal Q$, and 
$\widetilde{N}(dt,dz):=N(dt,dz)-dt\otimes \nu(dz)$ for a homogeneous Poisson random measure $N(dt,dz)$ on $\mathbb R_+\times H$ with compensator $dt\otimes\nu(dz)$. Here, $\nu$ is a $\sigma$-finite L\'evy measure on $H$. Furthermore, we have measurable coefficients $\alpha:\mathbb R_+\times H\rightarrow H$, $\eta:\mathbb R_+\times H\rightarrow L_{\text{HS}}(H)$ and
$\gamma:\mathbb R_+\times H\times H\rightarrow H$, where $L_{\text{HS}}(H)$ is the space of Hilbert-Schmidt operators on $H$.  

Following Filipovi\'c, Tappe and Teichmann \cite{FTT}, one can show under rather mild conditions on the coefficient functions that there exists a unique mild solution of the SPDE \eqref{SPDE}. We describe this in more detail: for $p\in\mathbb N$, let $L^p_{loc}(\mathbb R_+)$ be the space of locally $p$-integrable real-valued functions on $\mathbb R_+$. Assume that 
\begin{align*}
    \vert\alpha(\cdot,0)\vert&\in L^2_{loc}(\mathbb R_+) \\
    \Vert\eta(\cdot,0)\Vert_{\text{HS}}&\in L^2_{loc}(\mathbb R_+) \\
    \int_H\vert\gamma(\cdot,0,z)\vert^2\nu(dz)&\in L^1_{loc}(\mathbb R_+),
\end{align*}
where $\Vert\cdot\Vert_{\text{HS}}$ denotes the Hilbert-Schmidt norm. Further, there exists a function $C\in L^2_{loc}(\mathbb R_+)$  such that the following Lipschitz continuity holds,
\begin{align*}
    \vert\alpha(t,x)-\alpha(t,y)\vert&\leq C(t)\vert x-y\vert \\
    \Vert\eta(t,x)-\eta(t,y)\Vert_{\text{HS}}&\leq C(t)\vert x-y\vert \\
    \int_H\vert\gamma(t,x,z)-\gamma(t,y,z)\vert^2\nu(dz)&\leq C(t)^2\vert x-y\vert^2
\end{align*}
Under these conditions we recall Corollary 10.6 from Filipovi\'c, Tappe and Teichmann \cite{FTT}:
\begin{theorem}
\label{thm:ex-unique-X}
There exists a unique adapted mean-square continuous $H$-valued stochastic process $(X(t))_{t\geq 0}$ which is right-continuous with left-limits (RCLL) being a mild solution of \eqref{SPDE}, i.e., solving the integral equation
\begin{align}
\label{log:energy:price}
X(t)&=\mathcal S_t X_0+\int_0^t\mathcal S_{t-s}\alpha(s,X(s))ds+\int_0^t\mathcal S_{t-s}\eta(s,X(s))dW(s) \nonumber\\
&\qquad+\int_0^t\int_H\mathcal S_{t-s}\gamma(s,X(s-),z)\widetilde{N}(ds,dz),
\end{align}
where $X(s-)=\lim_{u\uparrow s}X(u)$. Moreover, 
$$
\mathbb E\left[\sup_{0\leq t\leq\tau}\vert X(t)\vert^2\right]<\infty
$$
for all $\tau>0$. 
\end{theorem}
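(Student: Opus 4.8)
The plan is to realize the mild solution \eqref{log:energy:price} as the unique fixed point of a map on a suitable Banach space of stochastic processes, following the classical Da Prato--Zabczyk scheme adapted to the jump setting of \cite{FTT}. Fix $\tau>0$ and let $\mathcal H_\tau$ denote the space of (equivalence classes of) adapted, RCLL, mean-square continuous $H$-valued processes $X$ on $[0,\tau]$ equipped with $\norm{X}_{\mathcal H_\tau}^2:=\sup_{0\le t\le\tau}\mathbb E[\abs{X(t)}^2]<\infty$; this is a Banach space. On it I would define the solution operator $\mathcal T$ by letting $(\mathcal T X)(t)$ equal the right-hand side of \eqref{log:energy:price}, that is, the sum of the initial term $\mathcal S_t X_0$, the drift convolution, the Wiener convolution, and the compensated-Poisson convolution.

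First I would check that $\mathcal T$ maps $\mathcal H_\tau$ into itself. Quasi-contractivity of $(\mathcal S_t)$ gives a uniform bound $\norm{\mathcal S_{t-s}}_{\mathcal L(H)}\le M$ on $[0,\tau]$. For the drift term one combines this with the Cauchy--Schwarz inequality, the assumption $\abs{\alpha(\cdot,0)}\in L^2_{loc}(\R_+)$, and the Lipschitz bound to control $\mathbb E[\abs{\int_0^t\mathcal S_{t-s}\alpha(s,X(s))\,ds}^2]$. For the two stochastic integrals the central tools are the It\^o isometries for Hilbert-space-valued integrals: the Wiener part is estimated through $\int_0^t\norm{\mathcal S_{t-s}\eta(s,X(s))\mathcal Q^{1/2}}_{\text{HS}}^2\,ds$ using the trace-class covariance $\mathcal Q$, and the jump part through $\int_0^t\int_H\abs{\mathcal S_{t-s}\gamma(s,X(s-),z)}^2\nu(dz)\,ds$; the growth hypotheses on $\eta$ and $\gamma$ at $0$ then yield finiteness, while strong continuity of the semigroup and dominated convergence give RCLL regularity of $\mathcal T X$.

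The contraction estimate is where the Lipschitz assumptions enter: subtracting and applying the same isometries gives, up to harmless constants, $\mathbb E[\abs{(\mathcal T X)(t)-(\mathcal T Y)(t)}^2]\le 3M^2\int_0^t(1+C(s))C(s)\,\mathbb E[\abs{X(s)-Y(s)}^2]\,ds$ with $C\in L^2_{loc}(\R_+)$. Rather than shrinking $\tau$, I would pass to the equivalent norm $\sup_t e^{-\beta t}\mathbb E[\abs{X(t)}^2]$ (or iterate $\mathcal T$ and run a Gronwall argument on the iterated kernel) so that $\mathcal T$ becomes a strict contraction on all of $[0,\tau]$ for $\beta$ large. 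Banach's fixed-point theorem then delivers existence and uniqueness of the mild solution on $[0,\tau]$, and letting $\tau\to\infty$ with consistency on overlapping intervals extends it to $\R_+$.

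The step I expect to be the main obstacle is upgrading the bound $\sup_t\mathbb E[\abs{X(t)}^2]$ to the stronger estimate $\mathbb E[\sup_{0\le t\le\tau}\abs{X(t)}^2]<\infty$ asserted in the theorem. The difficulty is that the stochastic convolutions $t\mapsto\int_0^t\mathcal S_{t-s}\eta(s,X(s))\,dW(s)$ are not martingales, so Doob's maximal inequality does not apply directly. The standard remedy I would follow is the factorization method: writing $\mathcal S_{t-s}=c_\alpha\int_s^t(t-r)^{\alpha-1}(r-s)^{-\alpha}\mathcal S_{t-r}\mathcal S_{r-s}\,dr$ for a suitable $\alpha\in(0,1/2)$, one expresses the convolution as $\mathcal S$ applied to an ordinary-time integral of an auxiliary process and controls the supremum through a deterministic factorization operator acting on $L^p$ in time, for $p$ large enough that $\alpha-1/p>0$; the jump term is treated analogously using the $L^p$ moment bounds for Poissonian integrals. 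This is precisely the technical heart behind Corollary 10.6 in \cite{FTT}.
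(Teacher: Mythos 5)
Your fixed-point skeleton is sound and is essentially the standard argument: the Bielecki-norm (or iterated-kernel Picard) contraction on the space of adapted mean-square continuous processes, with the It\^o isometries for the Wiener and compensated-Poisson convolutions, is exactly how existence and uniqueness are obtained in Filipovi\'c, Tappe and Teichmann \cite{FTT}. Note, though, that the paper itself gives no proof at all here --- Theorem \ref{thm:ex-unique-X} is quoted verbatim as Corollary 10.6 of \cite{FTT} --- so the comparison is really between your sketch and the proof in that reference. Up to and including the contraction estimate, they agree; one minor point is that with $C\in L^2_{loc}(\R_+)$ unbounded, the weight $e^{-\beta t}$ alone need not produce a contraction, and one should use the weight $\exp(-\beta\int_0^t C(s)^2\,ds)$ or the iterated-kernel bound $(\int_0^\tau C(s)^2\,ds)^n/n!$, which your parenthetical does cover.

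The genuine gap is in your final step. The factorization method is the wrong tool for the jump convolution, for two reasons. First, factorization requires $L^p$ estimates with $p>1/\alpha>2$, and $L^p$ moment bounds for Poissonian integrals (Bichteler--Jacod type) require conditions such as $\int_H\vert\gamma(t,x,z)\vert^p\,\nu(dz)<\infty$, which are not assumed --- the hypotheses only control second moments of $\gamma$ in $z$. Second, and more fundamentally, the factorization representation produces a \emph{continuous} modification of the convolution, which is impossible here: the process $t\mapsto\int_0^t\int_H\mathcal S_{t-s}\gamma(s,X(s-),z)\widetilde N(ds,dz)$ has genuine jumps, so it is c\`adl\`ag and not continuous. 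For the same reason, your claim that strong continuity plus dominated convergence yields RCLL regularity of $\mathcal T X$ is not a proof: path regularity of Poissonian stochastic convolutions is precisely the delicate point, since such convolutions are not martingales and need not even be c\`adl\`ag for general $C_0$-semigroups. The technical heart of Corollary 10.6 in \cite{FTT} is instead the Sz\H{o}kefalvi-Nagy dilation theorem: because the semigroup is quasi-contractive, it dilates to a unitary group $(U_t)$ on a larger Hilbert space $\bar H$, so that $\int_0^t\mathcal S_{t-s}\Phi\,dM(s)=\pi U_t\int_0^t U_{-s}\iota\Phi\,dM(s)$ with $\iota:H\to\bar H$ the embedding and $\pi$ the projection; the inner process is a square-integrable $\bar H$-valued martingale, and Doob's maximal inequality then delivers, in one stroke, both the c\`adl\`ag version and the bound $\mathbb E[\sup_{0\leq t\leq\tau}\vert X(t)\vert^2]<\infty$. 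A telling symptom in your sketch is that quasi-contractivity is used only for the uniform bound $\Vert\mathcal S_t\Vert\leq M$ on compacts --- but every $C_0$-semigroup satisfies that; the hypothesis is in the paper precisely to enable the dilation-based maximal inequality that your factorization step was meant to replace.
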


\begin{remark}
Notice that in Filipovi\'c, Tappe and Teichmann \cite{FTT}, the class of SPDEs analysed is more general than \eqref{SPDE}. The Hilbert space $H$ is not restricted to real-valued functions on $\mathbb R_+$. Furthermore, they treat general unbounded operators $A$ rather than merely $A=\partial_{\xi}$ as we focus on here. The operators $A$ must generate a $C_0$ quasi-contractive semigroup in their context. Next, the Wiener process $W$ can take values in a different Hilbert space than $H$, with the natural modification that $\eta$ is a Hilbert-Schmidt operator mapping from this space into $H$. Finally, they let the mark space of the Poisson random measure be a Blackwell space (again with appropriate modification of the function $\gamma$). Also, we remark in passing that Tappe \cite{Tappe} has relaxed the conditions on the parameter functions to be of local Lipschitz continuity and linear growth. We focus in the present paper on a special case to tailormake the situation to our application to financial forward contracts. 
\end{remark}
We note that in the case $\alpha, \eta$ and $\gamma$ are not state dependent, we demand only local integrability on the coefficient functions as Lipschitz continuity is trivially fulfilled.  We denote the mild solution $Y$ (and the initial condition $Y_0$ being $H$-valued $\mathcal F_0$-measurable random variable) in this case, which is explicitly given by
\begin{equation}
\label{eq:explicit-price-non-state-dep}
Y(t)=\mathcal S_tY_0+\int_0^t\mathcal S_{t-s}\alpha(s)ds+\int_0^t\mathcal S_{t-s}\eta(s)dW(s)+\int_0^t\int_H\mathcal S_{t-s}\gamma(s,z)\widetilde{N}(ds,dz).
\end{equation}
We emphasise that $Y$ is not a solution of an integral 
equation, but explicitly given in \eqref{eq:explicit-price-non-state-dep} in terms of (stochastic) integrals of the coefficient functions.

We define the forward price dynamics under the Musiela parametrisation as follows: with $F(t,T)$ being the forward price at time $t\geq 0$ for a contract delivering the underlying commodity at time $T\geq t$, let 
\begin{equation}
X(t,\xi):=F(t,t+\xi)
\end{equation}
where $\xi:=T-t\in\mathbb R_+$. Or, we may express the forward price in terms of the evaluation functional applied to $X$,
\begin{equation}
\label{eq:forward-HJM-markov}
    F(t,T)=\delta_{T-t}X(t)=X(t,T-t).
\end{equation}
From arbitrage theory (see e.g. Duffie \cite{Duffie}), the process $t\mapsto \hat{F}(t,T_1,T_2)$ for $t\leq T_1$ must be a (local) martingale to ensure no-arbitrage in the market. To ensure this, it is convenient to assume
that $t\mapsto F(t,T)$ for $t\leq T$ is a (local) martingale. It is readily seen that $\mu=0$ is a sufficient condition guaranteeing this, and we assume so in the sequel.  
 
 One may also consider an alternative geometric model based on the situation where the coefficient functions are not state-dependent. Supposing that $Y$ is given by \eqref{eq:explicit-price-non-state-dep}, we define $t\mapsto X(t):=\exp(Y(t))$ as the price dynamics. Recall that $H$ is a Banach algebra, and thus $\exp(Y(t))\in H$. We have that 
\begin{equation}
\label{eq:forward-HJM-geometric}
F(t,T)=\delta_{T-t}X(t)=\delta_{T-t}\exp(Y(t))=\exp(Y(t,T-t)).
\end{equation}
Thus, we ensure by the exponential modelling that the forward prices $F(t,T)$ are positive. Also, appealing to the algebra structure of $H$, we can express $X(t)$ as
\begin{equation}
\label{eq:markovian-struct-exponential}
    X(t)=\exp(Y(t))=\exp(\mathcal S_t Y_0)\exp(Y^0(t))=(\mathcal S_t X_0)\exp(Y^0(t))
\end{equation}
where $Y^0(t)$ is $Y(t)$ with initial condition being zero.  
To have an arbitrage-free forward price dynamics $t\mapsto F(t,T), t\leq T$, we must in this case demand that (see Benth and Kr\"uhner \cite[Prop. 6.3]{BK-book}) 
\begin{equation}
\label{arb:free:drift}
\alpha(t,\cdot):=-\frac12\vert \mathcal Q^{1/2}\eta^*(t)(\delta_{\cdot}^*1)\vert^2-\int_H\left(\exp(\gamma(t,z))-1-\gamma(t,z)\right)\nu(dz) 
\end{equation}
and that $\alpha(t)\in H$
for the non-state dependent case. It is also worth noticing that the no-arbitrage condition relies on the finite exponential moment integrability condition of $\nu$, which yields that $\exp(Y^0(\tau))$ is integrable.   
\begin{example}
Recall the Filipovi\'c space $H_w$ introduced in Example \ref{exp:Filipovic:space}.
Indeed, it is possible to show that $\alpha(t)\in H_w$ (see Prop. 6.3 and the following discussion in Benth and Kr\"uhner \cite{BK-book}). Here, notice that $h_{\xi}(u):=\delta_{\xi}^*1(u)$ is the representative of the linear functional $\delta_{\xi}$ in $H_w$, meaning that $\delta_{\xi} x=\langle h_{\xi},x\rangle_w$. This function is explicitly given in $H_w$ as
$$
h_{\xi}(u)=1+\int_0^{\xi\wedge u}w^{-1}(v)dv
$$
A natural specification of $\sigma$ could be simply a model of the Samuelson effect, i.e., $\eta(t)x(\xi):=a\exp(-b\xi)x(\xi)$. Hence, we specify the operator $\eta(t)$ to be constant in time, but a simple multiplication operator on $H_w$ by the function $\xi\mapsto a\exp(-b\xi)$, (with the constants $a,b>0$) which is an element of $H_w$ under some natural conditions of the scale $w$ of $H_w$ (see Thm. 4.17 in Benth and Kr\"uhner \cite{BK-CIMS}). 
\end{example}

Before continuing, we emphasise that in the sequel of this paper we shall consider two forward price models, one being \eqref{eq:forward-HJM-markov} based on $X$ solving the SPDE \eqref{SPDE} under the condition that $\mu=0$, or the geometric model \eqref{eq:forward-HJM-geometric} where $Y$ is the dynamics in \eqref{eq:explicit-price-non-state-dep} satisfying the no-arbitrage condition in \eqref{arb:free:drift}.

\subsection{Stochastic modelling of flow forwards}

Let us now consider a flow forward with delivery over the time interval $[T_1,T_2]$. Recalling the flow forward price at time $t\leq T_1$ by $\widehat{F}(t,T_1,T_2)$, where, 
\begin{equation}
\label{def:flow-forward}
    \widehat{F}(t,T_1,T_2)=\frac1{T_2-T_1}\int_{T_1}^{T_2}F(t,T)dT.
\end{equation}
Writing $\xi:=T_1-t$ and $\lambda:=T_2-T_1$, we find 
\begin{equation}
\label{eq:integral-flow-forward}
\widehat{X}_{\lambda}(t,\xi):=\widehat{F}(t,t+\xi,t+\xi+\lambda)=\frac1{\lambda}\int_0^{\infty}\mathrm{1}_{[0,\lambda]}(u-\xi)X(t,u)du.
\end{equation}
Here, $X$ is the process for the fixed-delivery forward in the previous Subsection. Hence, the flow forward price can be defined as a stochastic process $\widehat{X}_{\lambda}(t)$ taking values in a space of measurable real-valued functions on $\mathbb R_+$, $\xi\mapsto\widehat{X}_{\lambda}(t,\xi)$. 
Indeed, we can write $\hat{X}_{\lambda}(t)=\mathcal D_{\lambda}(X(t))$, for an integral operator defined from \eqref{eq:integral-flow-forward} by
\begin{equation}
\mathcal D_{\lambda}(f)=\frac{1}{\lambda}\int_0^{\infty}\mathrm{1}_{[0,\lambda]}(u-\cdot)f(u)du,
\end{equation} 
with $f\in H$. 
From now on, we assume that $\mathcal D_{\lambda}\in L(H)$, i.e., a bounded linear operator on $H$, such that $\widehat{X}_{\lambda}(t)$ becomes an $H$-valued stochastic process. 

\begin{remark}
In Benth and Kr\"uhner \cite[Prop. 2.1.]{BK-SIFIN}, it is shown that $\mathcal D_{\lambda}\in L(H_w)$. In fact, this holds for more general flow forward integral operators associated with relations with the fixed-delivery forward beyond \eqref{def:flow-forward}.
\end{remark}

\section{The option price functional}
\label{sec:option}

Consider an option written on a flow forward contract delivering over $[T_1,T_2]$, where we assume that the option has exercise time $\tau\leq T_1$ with payoff defined by a measurable function $\mathfrak{P}:\mathbb R\rightarrow\mathbb R$. Assuming that 
$\mathfrak{P}(\widehat{F}(\tau,T_1,T_2))$ is integrable, the no-arbitrage price at time $t\leq\tau$ of the option is given by (see e.g. Bj\"ork \cite{bjork})
$$
V(t):=e^{-r(\tau-t)}\mathbb E\left[\mathfrak{P}(\widehat{F}(\tau,T_1,T_2))\,\vert\,\mathcal F_t\right].
$$
Here and in the sequel we assume for simplicity that the risk-free interest rate $r>0$ is constant and deterministic. 

Using the definitions in the previous Section, the option payoff can be written
\begin{align*}
\mathfrak{P}(\widehat{F}(\tau,T_1,T_2))&=\mathfrak{P}(\widehat{F}(\tau,\tau+(T_1-\tau),\tau+(T_1-\tau)+T_2-T_1)) \\
&=\mathfrak{P}(\delta_{T_1-\tau}\mathcal D_{T_2-T_1}(X(\tau))).
\end{align*}
Hence, the price becomes
\begin{equation}
    V(t)=e^{-r(\tau-t)}\mathbb E\left[\mathfrak{P}\left(\delta_{T_1-\tau}\mathcal D_{T_2-T_1}(X(\tau)\right)\,\vert\,\mathcal F_t\right].
\end{equation}

We have the following:
\begin{proposition}
Let $X$ be given by the dynamics in \eqref{SPDE}, and assume that the coefficient functions $\alpha$, $\eta$ and $\gamma$ are independent of time (i.e., only state dependent). It then holds that $V(t):=V(t,X(t))$ where
\begin{equation}
\label{eq:price-markovian}
    V(t,x)=e^{-r(\tau-t)}\mathbb E\left[\mathfrak{P}\left(\delta_{T_1-\tau}\mathcal D_{T_2-T_1}(X^{t,x}(\tau))\right)\right].
\end{equation}
Here, $X^{t,x}(\tau)$ means that $X^{t,x}(t)=x\in H$.
\end{proposition}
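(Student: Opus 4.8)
The plan is to recognise the claim as a manifestation of the (time-homogeneous) Markov property of the mild solution $X$ to the SPDE \eqref{SPDE}, and to reduce the conditional expectation defining $V(t)$ to an unconditional one evaluated at the current state. First I would give a precise meaning to $X^{t,x}$: for a fixed starting time $t$ and deterministic initial datum $x\in H$, Theorem \ref{thm:ex-unique-X}, applied to the SPDE posed on $[t,\infty)$ — which is legitimate precisely because the hypothesis forces $\alpha,\eta,\gamma$ to be time-independent, so the restarted equation is literally the same equation with shifted initial time — furnishes a unique mild solution $X^{t,x}$ with $X^{t,x}(t)=x$, solving
\begin{align*}
X^{t,x}(\tau)&=\mathcal S_{\tau-t}x+\int_t^\tau \mathcal S_{\tau-s}\alpha(X^{t,x}(s))\,ds+\int_t^\tau \mathcal S_{\tau-s}\eta(X^{t,x}(s))\,dW(s)\\
&\quad+\int_t^\tau\int_H \mathcal S_{\tau-s}\gamma(X^{t,x}(s-),z)\,\widetilde N(ds,dz).
\end{align*}
Since $\delta_{T_1-\tau}\in H^*$ and $\mathcal D_{T_2-T_1}\in L(H)$ are bounded, the map $x\mapsto \mathfrak P(\delta_{T_1-\tau}\mathcal D_{T_2-T_1}(x))$ is a measurable real functional on $H$, so the right-hand side of \eqref{eq:price-markovian} is a well-defined (by the standing integrability assumption) deterministic function $V(t,x)$.

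The heart of the argument is to show that, conditionally on $\mathcal F_t$, the segment $\{X(u)\}_{u\ge t}$ of the globally defined solution coincides with $X^{t,X(t)}$. I would establish this via a restarting argument: using the semigroup property $\mathcal S_{\tau-s}=\mathcal S_{\tau-t}\mathcal S_{t-s}$ together with the additivity of the stochastic integrals over $[0,t]$ and $[t,\tau]$, the mild solution \eqref{log:energy:price} restricted to $[t,\tau]$ is seen to satisfy the same integral equation as $X^{t,X(t)}$, driven by the shifted noise $W(\cdot)-W(t)$ and by $\widetilde N$ on $(t,\tau]\times H$. Pathwise uniqueness of the mild solution then identifies $X(\tau)$ with $X^{t,X(t)}(\tau)$. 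Crucially, the increments $W(u)-W(t)$ for $u\ge t$ and the restriction of $N$ to $(t,\infty)\times H$ are independent of $\mathcal F_t$, whereas $X(t)$ is $\mathcal F_t$-measurable; combined with the state-only dependence of the coefficients, this yields the Markov property
\begin{align*}
\mathbb E\big[\mathfrak P(\delta_{T_1-\tau}\mathcal D_{T_2-T_1}(X(\tau)))\,\big|\,\mathcal F_t\big]&=g(X(t)),\\
g(x)&:=\mathbb E\big[\mathfrak P(\delta_{T_1-\tau}\mathcal D_{T_2-T_1}(X^{t,x}(\tau)))\big].
\end{align*}

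Multiplying by the deterministic discount factor $e^{-r(\tau-t)}$ gives $V(t)=e^{-r(\tau-t)}g(X(t))=V(t,X(t))$, which is the assertion. The main obstacle is the rigorous justification of the Markov property in this infinite-dimensional jump–diffusion setting: one must control the interchange of conditioning with the Hilbert-space-valued It\^o and Poisson integrals and verify that the conditional law of $X^{t,x}(\tau)$ is a genuine measurable function of $x$. This is standard for mild solutions of such SPDEs and can be invoked from the existence theory of Filipovi\'c, Tappe and Teichmann \cite{FTT}, so the remaining work is largely bookkeeping: checking the measurability of $x\mapsto g(x)$ and the independence of the post-$t$ noise from $\mathcal F_t$, both of which follow from the independent-increment structure of $W$ and $N$.
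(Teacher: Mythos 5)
Your proposal is correct and follows essentially the same route as the paper: the paper's entire proof is a one-line appeal to the Markov property of $X$ for time-independent coefficients (citing Th.~9.30 in Peszat and Zabczyk \cite{PZ}), which is exactly the property you establish. The only difference is that you unfold the standard proof of that property --- the restart-plus-pathwise-uniqueness argument combined with independence of the post-$t$ increments of $W$ and $N$ from $\mathcal F_t$ --- rather than citing it, while deferring the same technical measurability points to the standard theory.
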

\begin{proof}
This follows from the Markovian property of $X$ in the case of coefficients being independent of time, see e.g., Th. 9.30 in Peszat and Zabczyk \cite{PZ}. 
\end{proof}
I.e., we have that the price of the option becomes a functional of $X(t)$. 
Recall \eqref{eq:markovian-struct-exponential} for the exponential price defined in \eqref{eq:forward-HJM-geometric}. Hence, from \eqref{eq:explicit-price-non-state-dep} we find for $Y(\tau)$ given $Y(t)$, $\tau\geq t$,
$$
Y(\tau)=\mathcal S_{\tau-t}Y(t)+\int_t^{\tau}\mathcal S_{\tau-s}\alpha(s)ds+\int_t^{\tau}\mathcal S_{\tau-s}\eta(s)dW(s)+\int_t^{\tau}\int_H\gamma(s,z)\widetilde{N}(ds,dz).
$$
Therefore, 
$$
X(\tau)=\exp(\mathcal S_{\tau-t}Y(t))\exp(Z_{t,\tau})
$$
where $Z_{t,\tau}$ is defined as
\begin{equation}
\label{eq:Z-definition}
Z_{t,\tau}:=\int_t^{\tau}\mathcal S_{\tau-s}\alpha(s)ds+\int_t^{\tau}\mathcal S_{\tau-s}\eta(s)dW(s)+\int_t^{\tau}\int_H\gamma(s,z)\widetilde{N}(ds,dz)
\end{equation}
with the no-arbitrage condition \eqref{arb:free:drift} in place for $\alpha$. Thus, $Z_{t,\tau}$
is a random variable in $H$ independent of $\mathcal F_t$ by the independent increment property of the Wiener process and the jump measure. Furthermore, the no-arbitrage condition \eqref{arb:free:drift} involves an exponential integrability condition on the L\'evy measure, which, together with Fernique's Theorem implies that $\exp(Z_{t,\tau})$ is integrable. Moreover, by using the shift semigroup, 
$$
\exp(\mathcal S_{\tau-t}Y(t))=\mathcal S_{\tau-t}\exp(Y(t))=\mathcal S_{\tau-t}X(t)
$$
Thus, also in this case we find $V(t)=V(t,X(t))$ where
\begin{equation}
\label{eq:price-markovian-exponential}
V(t,x)=e^{-r(\tau-t)}\mathbb E\left[\mathfrak{P}\left(\delta_{T_1-\tau}\mathcal D_{T_2-T_1}((\mathcal S_{\tau-t}x)\exp(Z_{t,\tau}))\right)\right].
\end{equation}
\begin{remark} 
Notice that the $X(t)$ is a positive-valued function in our Hilbert space as it is defined by $X(t)=\exp(Y(t))$. Thus, $V(t,x)$ in \eqref{eq:price-markovian-exponential} is defined on the domain $H_+$, that is, the subset of $H$ consisting of positive-valued functions in $H$, when we assume that this is the price of the option. However, looking at the actual expression in \eqref{eq:price-markovian-exponential}, it obviously makes sense as it stands for all $x\in H$. We shall understand $V(t,\cdot)$ as a function defined on all of $H$.
\end{remark}
Hence, for both our dynamical models of the flow forward price, the corresponding option price is a functional $V:[0,\tau]\times H\rightarrow \mathbb R$.

The following is an important motivation for the studies in this paper:
Notice that for the geometric model with $Z_{t,\tau}$ given in \eqref{eq:Z-definition}
\begin{align*}
   \delta_{T_1-\tau}\mathcal D_{T_2-T_1}((\mathcal S_{\tau-t}x)\exp(Z_{t,\tau}))&=\frac1{T_2-T_1}\int_{T_1-\tau}^{T_2-\tau}x(u+\tau-t)e^{Z_{t,\tau}(u)}du \\
   &=\frac1{T_2-T_1}\int_{T_1-t}^{T_2-t}x(v)e^{Z_{t,\tau}(v-(\tau-t))}dv.
\end{align*}
From this it is evident that the option price does not depend on $\widehat{F}(t,T_1,T_2)$ only, but rather the whole curve $\xi\mapsto X(t,\xi)$, that is, the term structure of fixed-forward prices. I.e., the option price is a functional $V(t,\cdot):H\rightarrow\mathbb R$ and cannot be reduced to a function depending only on $\widehat{F}(t,T_1,T_2)$.

We continue analysing the price functional $V$ in \eqref{eq:price-markovian} and \eqref{eq:price-markovian-exponential}, first showing the crucial property of Lipschitz continuity of $x\mapsto V(t,x)$. This is fundamental in applying a feedforward neural network to compute $V$.  

\begin{proposition}
\label{prop:lipschitz}
Assume $\mathfrak{P}$ is Lipschitz continuous. Then $V$ is well-defined and Lipschitz continuous, i.e., for any $x,y\in H$ and $t\in[0,\tau]$,
$$
\vert V(t,x)-V(t,y)\vert\leq K\vert x-y\vert 
$$
for some constant $K>0$.
\end{proposition}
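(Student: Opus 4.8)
The plan is to handle both dynamical models at once as far as possible --- the Markovian price \eqref{eq:price-markovian} and the geometric price \eqref{eq:price-markovian-exponential} --- by first peeling off the Lipschitz outer function $\mathfrak{P}$ and the two bounded linear maps $\delta_{T_1-\tau}$ and $\mathcal D_{T_2-T_1}$, and only then controlling the dependence on the initial datum $x$. Writing $L$ for the Lipschitz constant of $\mathfrak{P}$, and letting $\Xi^{t,x}$ denote the $H$-valued random variable inside $\mathfrak{P}$ (namely $X^{t,x}(\tau)$ in the first model and $(\mathcal S_{\tau-t}x)\exp(Z_{t,\tau})$ in the second), the common first step is the elementary chain
\[
\abs{V(t,x)-V(t,y)} \le e^{-r(\tau-t)} L\, \EE\left[\abs{\delta_{T_1-\tau}\mathcal D_{T_2-T_1}(\Xi^{t,x}-\Xi^{t,y})}\right].
\]
Since $\delta_{T_1-\tau}\in H^*$ and $\mathcal D_{T_2-T_1}\in L(H)$ are bounded and $e^{-r(\tau-t)}\le 1$, the right-hand side is dominated by $L\,\norm{\delta_{T_1-\tau}}_{H^*}\norm{\mathcal D_{T_2-T_1}}_{L(H)}\,\EE[\abs{\Xi^{t,x}-\Xi^{t,y}}]$. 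The whole problem then reduces to a bound linear in $\abs{x-y}$ for $\EE[\abs{\Xi^{t,x}-\Xi^{t,y}}]$, together with a finiteness check $\EE[\abs{\Xi^{t,x}}]<\infty$ (via $\abs{\mathfrak{P}(a)}\le\abs{\mathfrak{P}(0)}+L\abs a$) that guarantees $V$ is well-defined.

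For the geometric model the remaining estimate is the cleaner one, so I would do it first. Because the map $x\mapsto(\mathcal S_{\tau-t}x)\exp(Z_{t,\tau})$ is \emph{linear} in $x$, we have $\Xi^{t,x}-\Xi^{t,y}=(\mathcal S_{\tau-t}(x-y))\exp(Z_{t,\tau})$. Exploiting that $H$ is a Banach algebra gives $\abs{\Xi^{t,x}-\Xi^{t,y}}\le C\,\abs{\mathcal S_{\tau-t}(x-y)}\,\abs{\exp(Z_{t,\tau})}$, and the quasi-contractivity of $(\mathcal S_t)_t$ bounds $\abs{\mathcal S_{\tau-t}(x-y)}$ by $Me^{\omega(\tau-t)}\abs{x-y}$. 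Taking expectations, the only probabilistic input is $\EE[\abs{\exp(Z_{t,\tau})}]<\infty$, which is precisely the integrability already recorded after \eqref{eq:Z-definition}, namely the exponential-moment condition on $\nu$ built into \eqref{arb:free:drift} combined with Fernique's theorem. This simultaneously settles well-definedness here and produces a constant $K$ uniform over $t\in[0,\tau]$.

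For the Markovian model the core is the Lipschitz dependence of the mild solution on its initial condition, i.e.\ $\EE[\abs{X^{t,x}(\tau)-X^{t,y}(\tau)}^2]\le C\abs{x-y}^2$. Starting from the mild formulation \eqref{log:energy:price}, I would subtract the two solutions, apply the quasi-contractive bound to $\mathcal S_{\tau-s}$, and estimate the three integral terms using the Lipschitz hypotheses on $\alpha,\eta,\gamma$ --- the stochastic integrals through the It\^o isometry and the compensated-Poisson isometry, and the drift through Jensen. This yields $\EE[\abs{X^{t,x}(s)-X^{t,y}(s)}^2]\le c_1\abs{x-y}^2 + c_2\int_t^s C(u)^2\,\EE[\abs{X^{t,x}(u)-X^{t,y}(u)}^2]\,du$, and Gronwall's lemma closes the estimate; passing from this $L^2$ bound to the $L^1$ bound needed above is Cauchy--Schwarz, and finiteness of $\EE[\abs{X^{t,x}(\tau)}]$ follows from the square-integrability in Theorem \ref{thm:ex-unique-X}. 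The main obstacle is exactly this Gronwall argument for the SPDE: one must be careful that the Lipschitz constants enter through the $L^2_{loc}$ function $C(\cdot)$ so the resulting constant stays finite on $[0,\tau]$, and that the semigroup's exponential factor $e^{\omega(\tau-s)}$ is absorbed without spoiling integrability.
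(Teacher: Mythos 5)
Your proposal is correct and follows essentially the same route as the paper: peel off the Lipschitz payoff $\mathfrak{P}$ and the bounded operators $\delta_{T_1-\tau}$, $\mathcal D_{T_2-T_1}$, then for the geometric model exploit linearity in $x$, the Banach algebra structure, the quasi-contractive shift semigroup and the exponential integrability of $Z_{t,\tau}$ (via the no-arbitrage condition and Fernique), and for the Markovian model reduce via Cauchy--Schwarz to the $L^2$-Lipschitz dependence of the mild solution on its initial datum. The only difference is that where you re-derive that last estimate by a Gronwall argument on the mild formulation, the paper simply cites Thm.~9.29 (part (1.ii)) of Peszat and Zabczyk (and, for well-definedness in the geometric case, uses Tonelli with the martingale property $\EE[\exp(Y^0(\tau,u))]=1$ rather than your norm bound) --- both are valid under the same hypotheses.
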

\begin{proof}
As $\mathfrak{P}$ is Lipschitz continuous, 
$$
\vert\mathfrak{P}(u)\vert\leq K\vert u\vert+\vert \mathfrak{P}(0)\vert
$$
for some constant $K>0$. Therefore, from the boundedness of the operators $\delta_{\xi}$, $\mathcal D_{\lambda}$ 
$$
\vert\mathfrak{P}(\delta_{T_1-\tau}\mathcal D_{T_2-T_1}(X(\tau)))\vert\leq K\Vert\delta_{T_1-\tau}\Vert_{\text{op}}\Vert\mathcal D_{T_2-T_1}\Vert_{\text{op}}\vert X(\tau)\vert+\vert\mathfrak{P}(0)\vert.
$$
From the uniform $L^2$-bound on $X(\tau)$ in Thm. \ref{thm:ex-unique-X} the well-definedness of $V$ follows when $X$ is given by \eqref{SPDE}. When $X(\tau)=\exp(Y(\tau))$ with $Y$ as in
\eqref{eq:explicit-price-non-state-dep}, then from \eqref{eq:markovian-struct-exponential} and the definitions of the operators $\delta_{\xi}$ and $\mathcal D_{\lambda}$,
$$
\delta_{T_1-\tau}\mathcal D_{T_2-T_1}(X(\tau))=\frac{1}{T_2-T_1}\int_0^{\infty}\mathrm{1}_{[0,T_2-T_1]}(u-(T_1-\tau))\exp(y(\tau+u))\exp(Y^0(\tau,u))du.
$$
By the no-arbitrage drift condition 
$\mathbb E[\exp(Y^0(\tau,u))]=1$, we have by Tonelli's theorem
$$
\mathbb E[\delta_{T_1-\tau}\mathcal D_{T_2-T_1}(X(\tau))]=\delta_{T_1-\tau}\mathcal{D}_{T_2-T_1}(\exp(\mathcal S_{\tau}y))
$$
which is finite as $\mathcal D_{\lambda}\in L(H)$. Thus, from the linear growth of $\mathfrak{P}$ the well-definedness of $V$ also follows for the exponential model. 

To show Lipschitz continuity, consider first the case of $V$ in the exponential model \eqref{eq:price-markovian-exponential}. By Lipschitz continuity of $\mathfrak{P}$, 
\begin{align*}
\vert V(t,x)&-V(t,y)\vert \\
&\leq K\mathbb E[\vert\delta_{T_1-\tau}\mathcal D_{T_2-T_1}(\mathcal S_{\tau-t}(x-y)e^{Z_{t,\tau}})\vert] \\
&\leq K\Vert\delta_{T_1-\tau}\Vert_{\text{op}}\Vert\mathcal D_{T_2-T_1}\Vert_{\text{op}}\Vert\mathcal S_{\tau-t}\Vert_{\text{op}}\mathbb E[\vert\exp(\vert Z_{t,\tau})\vert]\vert x-y\vert.
\end{align*}
Again, as all the involved operators are bounded and $Z_{t,\tau}$ is an exponentially integrable random variable, the assertion holds. In the Markovian model, we appeal again to the Lipschitz continuity of $\mathfrak{P}$, and using the Cauchy-Schwarz inequality we reach
$$
\vert V(t,x)-V(t,y)\vert\leq K\Vert\delta_{T_1-\tau}\Vert_{\text{op}}\Vert\mathcal D_{T_2-T_1}\Vert_{\text{op}}\mathbb E\left[\vert X^{t,x}(\tau)-X^{t,y}(\tau)\vert^2\right]^{1/2}.
$$
By Thm. 9.29 (part (1.ii)) in Peszat and Zabczyk \cite{PZ}, the latter expectation is bounded by $\vert x-y\vert$.  
Thus, we find the desired Lipschitz continuity of $V$. 
\end{proof}
\begin{remark} When pricing an option on a given flow forward, one may ask the question why model the whole curve and not simply only $t\mapsto \widehat{F}(t,T_1,T_2)$ directly and then price the option? An argument for modeling the whole term-structure curve is that we may have many forward contracts and different options on these. It is desirable that all the option prices are internally consistent, in the sense that one cannot obtain arbitrage. This will be guaranteed when modelling the whole forward curve dynamics. Of course, one could model all the (finite number) of forwards as a multi-dimensional process, but then one looses the possibility to also include other forwards, possibly not listed in the market. For the latter case, imagine a trader getting an offer of an option on a new forward not part of her multivariate model. Then the whole model must be re-defined, to include the new forward into the pricing. 
\end{remark}
\begin{remark}
Considering $V$ as in \eqref{eq:price-markovian-exponential}, it holds that $H\ni y\mapsto V(t,e^y)$ is locally Lipschitz continuous, since the map $y\mapsto e^y$ is locally Lipschitz continuous. Notice that with the exponential model, we can consider the option price $V(t,X(t))$ as a functional of $Y(t)$, i.e., $V(t)=V(t,\exp(Y(t))$.
\end{remark}

\section{Approximating the price functional by neural networks}\label{sec:approx:price:fct}
To prepare for approximating the option price $V(t,x)$ by a neural network, we recast it as a solution of an optimization problem following Beck et al. \cite{Beck2021}. First, we notice that both in the Markovian and geometric forward price models, the option price functional may be expressed generically as a functional $\overline{V}: H\rightarrow \mathbb R$,
\begin{equation}
\overline{V}(x)=\mathbb E[\mathcal X(x)]
\end{equation}
where $\mathcal X$ is a random field on $H$. Indeed, from \eqref{eq:price-markovian-exponential} we find for the geometric model that 
\begin{equation}
\label{eq:X-def-geom}
\mathcal X(x)=e^{-r(\tau-t)}\mathfrak{P}(\delta_{T_1-\tau}\mathcal D_{T_2-T_1}((\mathcal S_{\tau-t}x)\exp(Z_{t,\tau})))
\end{equation}
with $Z_{t,\tau}$ in \eqref{eq:Z-definition}, and for the Markovian model we derive from \eqref{eq:price-markovian}
\begin{equation}
\label{eq:X-def-markov}
\mathcal X(x)=e^{-r(\tau-t)}\mathfrak{P}(\delta_{T_1-\tau}\mathcal D_{T_2-T_1}(X^{t,x}(\tau))
\end{equation}
Notice that we have ignored the dependency on $t$, $T_1$, $T_2$ and $\tau$ here, as these do not play a role in what follows. 
We recall that $\overline{V}$ is a Lipschitz continuous mapping on $H$ (see Prop. \ref{prop:lipschitz}).

To this end, 
introduce a measure $\mu:\mathcal B(H)\rightarrow[0,\infty]$ and assume 
\begin{equation}
\label{cond:q-integration}
\mathbb E\left[\int_{H}\mathcal X^2(x)\mu(dx)\right]<\infty.
\end{equation}
This assumption is a joint condition on the the measure $\mu$ and the random field $\mathcal X(x)$. In the next Lemma, we state a sufficient condition on $\mu$:
\begin{lemma}
\label{lemma:suff-mu-int}
Assume for the exponential model that $\exp(2\vert Z_{t,\tau}\vert)$ is integrable, with $Z_{t,\tau}$ defined in \eqref{eq:Z-definition}.
If 
$$
\int_H\max(1,\vert x\vert^2)\mu(dx)<\infty
$$
then condition \eqref{cond:q-integration} holds.
\end{lemma}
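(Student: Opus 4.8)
The plan is to derive a pointwise upper bound on $\mathcal X^2(x)$ of the shape $\text{(const)}\cdot(\vert x\vert^2 R+1)$, where $R$ is a nonnegative, integrable random variable \emph{not depending on} $x$, and then to integrate against $\mu$ and take expectations, using Tonelli's theorem to interchange the two and factor the $x$-dependence away from the $\omega$-dependence. Since everything in sight is nonnegative, Tonelli applies without any a priori integrability, so the whole argument reduces to producing such a bound and checking that the two resulting factors are finite.

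First I would reuse the linear growth of $\mathfrak P$ already exploited in Proposition \ref{prop:lipschitz}: Lipschitz continuity gives $\vert\mathfrak P(u)\vert\le K\vert u\vert+\vert\mathfrak P(0)\vert$, and hence, via $(a+b)^2\le 2a^2+2b^2$,
\[
\mathcal X^2(x)\le C_0\big(\vert\delta_{T_1-\tau}\mathcal D_{T_2-T_1}(\cdots)\vert^2+1\big)
\]
for a deterministic constant $C_0$. Bounding $\delta_{T_1-\tau}$ and $\mathcal D_{T_2-T_1}$ by their operator norms $\Vert\cdot\Vert_{\text{op}}$ reduces the whole matter to controlling the $H$-norm of the state at the exercise time $\tau$.

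It is at this point that the two models diverge. For the exponential model the state is $(\mathcal S_{\tau-t}x)\exp(Z_{t,\tau})$ with $Z_{t,\tau}$ as in \eqref{eq:Z-definition}, and the crucial step is to invoke the Banach-algebra structure of $H$: submultiplicativity of the norm yields $\vert(\mathcal S_{\tau-t}x)\exp(Z_{t,\tau})\vert\le\Vert\mathcal S_{\tau-t}\Vert_{\text{op}}\,\vert x\vert\,\vert\exp(Z_{t,\tau})\vert$, while expanding the exponential as a series and using $\vert f^n\vert\le\vert f\vert^n$ gives $\vert\exp(Z_{t,\tau})\vert\le\exp(\vert Z_{t,\tau}\vert)$. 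Combining these, $\mathcal X^2(x)\le C_1\big(\vert x\vert^2\exp(2\vert Z_{t,\tau}\vert)+1\big)$, so here $R=\exp(2\vert Z_{t,\tau}\vert)$. For the Markovian model the state is $X^{t,x}(\tau)$, and instead of the algebra trick I would appeal to the standard second-moment growth estimate for mild solutions, $\mathbb E[\vert X^{t,x}(\tau)\vert^2]\le C_2(1+\vert x\vert^2)$, a Gronwall-type bound under the Lipschitz and linear-growth hypotheses on $\alpha,\eta,\gamma$, in the same spirit as the estimate from Peszat and Zabczyk invoked in Proposition \ref{prop:lipschitz}.

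Finally, Tonelli gives, for the exponential model,
\[
\mathbb E\Big[\int_H\mathcal X^2(x)\,\mu(dx)\Big]\le C_1\Big(\mathbb E\big[\exp(2\vert Z_{t,\tau}\vert)\big]\int_H\vert x\vert^2\,\mu(dx)+\mu(H)\Big),
\]
with the Markovian case identical after performing the expectation first. The hypothesis $\int_H\max(1,\vert x\vert^2)\,\mu(dx)<\infty$ simultaneously forces $\mu(H)<\infty$ and $\int_H\vert x\vert^2\,\mu(dx)<\infty$, while the assumed integrability of $\exp(2\vert Z_{t,\tau}\vert)$ (respectively the constant $C_2$) renders the remaining factor finite, yielding \eqref{cond:q-integration}. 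The only genuinely delicate point is the exponential estimate $\vert\exp(Z_{t,\tau})\vert\le\exp(\vert Z_{t,\tau}\vert)$: this is exactly where the Banach-algebra assumption on $H$ is indispensable, and it is what converts the abstract condition \eqref{cond:q-integration} into the concrete, checkable integrability of $\exp(2\vert Z_{t,\tau}\vert)$ (should the algebra norm carry a multiplicative constant rather than be strictly submultiplicative, one simply absorbs it into the exponent, at the mild cost of assuming integrability of $\exp(2c\vert Z_{t,\tau}\vert)$ for the relevant constant $c$).
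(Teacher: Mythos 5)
Your proposal is correct and follows essentially the same route as the paper: the linear-growth bound on $\mathfrak P$, operator-norm bounds for $\delta_{T_1-\tau}$ and $\mathcal D_{T_2-T_1}$, the second-moment estimate $\mathbb E[\vert X^{t,x}(\tau)\vert^2]\le C(1+\vert x\vert^2)$ in the Markovian case (which the paper obtains via the triangle inequality against $X^{t,0}(\tau)$, Thm.~9.29(ii) of Peszat and Zabczyk, and the uniform $L^2$-bound of Theorem~\ref{thm:ex-unique-X}), the Banach-algebra bound giving $\mathcal X^2(x)\le C\bigl(1+\vert x\vert^2\exp(2\vert Z_{t,\tau}\vert)\bigr)$ in the exponential case, and Tonelli to conclude. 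If anything, you are more explicit than the paper on the submultiplicativity step $\vert\exp(Z_{t,\tau})\vert\le\exp(\vert Z_{t,\tau}\vert)$ and the possible algebra-norm constant, which the paper uses silently.
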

\begin{proof}
For simplicity, we suppose that $r=0$, and recall the Lipschitz continuity of $\mathfrak{P}$ resulting in the linear growth bound
$$
\vert\mathfrak{P}(z)\vert^2\leq C_1+C_2z^2
$$
for positive constants $C_1, C_2$. 

Consider first the Markovian case of $\mathcal X(x)$ in \eqref{eq:X-def-markov}: using $\mathcal L:=\delta_{T_1-\tau}\mathcal D_{T_2-T_1}$ as short-hand notation, we have
$$
\mathbb E[\mathcal X^2(x)]=\mathbb E[\mathfrak{P}^2(\mathcal L(X^{t,x}(\tau))]\leq C_1+C_2\Vert\mathcal L\Vert_{\text{op}}^2\mathbb E[\vert X^{t,x}(\tau)\vert^2].
$$
Thm. 9.29(ii) of Peszat and Zabczyk \cite{PZ} along with the uniform $L^2$-bound in Thm. \ref{thm:ex-unique-X} yield, after using the triangle inequality,
\begin{align*}
  \mathbb E[\vert X^{t,x}(\tau)\vert^2]&\leq 2\mathbb E[\vert X^{t,0}(\tau)\vert^2]+2\mathbb E[\vert X^{t,x}(\tau)-X^{t,0}(\tau)\vert^2]  \\
  &\leq 2C(1+\vert x\vert^2),
\end{align*}
for some positive constant $C$. Hence, by Tonelli's Theorem (and with $C$ now denoting a generic constant)
$$
\mathbb E[\int_H\mathcal X^2(x)\mu(dx)]=\int_H\mathbb E[\mathcal X^2(x)]\mu(dx)\leq C\int_H1+\vert x\vert^2\mu(dx).
$$
Thus the claim follows for the Markovian case. 

Next, consider the exponential model for $\mathcal X(x)$ in \eqref{eq:X-def-geom}. Then it follows that
$$
\mathbb E[\mathcal X^2(x)]=\mathbb E[\mathfrak{P}^2(\mathcal L((\mathcal S_{\tau-t}x))\exp(Z_t,\tau))]\leq C_1+C_2\Vert\mathcal L\Vert_{\text{op}}^2\Vert\mathcal S_{\tau-t}\Vert_{\text{op}}^2\vert x\vert^2\mathbb E[\exp(2\vert Z_{t,\tau}\vert)].
$$
By the integrability assumption on $Z_{t,\tau}$, the claim follows. 
\end{proof}
In the case of an exponential model without jumps, Fernique's Theorem (see e.g. Thm. 3.31 in Peszat and Zabczyk \cite{PZ}) ensures that $\exp(2\vert Z_{t,\tau}\vert)$ is integrable. This is readily seen from the elementary inequality $2a b\leq a^2\epsilon^{-2}+\epsilon^2b^2$ for any $\epsilon>0$, yielding
$$
2\vert Z_{t,\tau}\vert\leq \epsilon^{-2}+\epsilon^2\vert Z_{t,\tau}\vert^2.
$$
If we have jumps in the model, the exponential integrability of $2\vert Z_{t,\tau}\vert$ can be translated into an exponential integrability condition on the L\'evy measure.  

In the application to neural nets, it is the case of $\mu$ having compact support which is of interest to us. We see from the Lemma above that under mild additional integrability hypotheses for the exponential model in the jump case, condition \eqref{cond:q-integration} is satisfied in this case as $x\mapsto\max(1,\vert x\vert^2)$ is a continuous function on $H$.

Consider the class of functions $g\in CL^2_{lip}(\mu)$, where $CL^2_{lip}(\mu)$ is the set of real-valued Lipschitz continuous functions on $H$ where $\int_Hg^2(x)\mu(dx)<\infty$. Obviously, $CL^2_{lip}(\mu)\subset L^2(\mu)$. Furthermore, under the conditions of Lemma \ref{lemma:suff-mu-int}, it follows that $\overline{V}\in CL_{lip}^2(\mu)$. The following Lemma shows that $\overline{V}$ is the global minimizer on $L^2(\mu)$: 
\begin{lemma}
It holds that 
\begin{equation}
\overline{V}(\cdot)=\arg\min_{g\in L^2(\mu)}\mathbb E\left[\int_H\vert \mathcal X(x)-g(x)\vert^2\mu(dx)\right].
\end{equation}
\end{lemma}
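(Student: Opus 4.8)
The plan is to invoke the classical variational characterisation of the mean: among all square-integrable predictors, the one minimising the expected squared error is the conditional/unconditional expectation. Here the candidate minimiser $\overline{V}(x)=\mathbb E[\mathcal X(x)]$ is exactly the pointwise (in $x$) mean of the random field $\mathcal X$, so I expect it to be the pointwise minimiser of $g(x)\mapsto\mathbb E[\vert\mathcal X(x)-g(x)\vert^2]$ for each fixed $x$, and integrating this pointwise optimality against $\mu$ should yield the global statement. The only structural facts I need are that $\overline{V}\in CL^2_{lip}(\mu)\subset L^2(\mu)$ (which holds under the conditions of Lemma~\ref{lemma:suff-mu-int}) and the joint integrability \eqref{cond:q-integration}.

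Concretely, I would fix an arbitrary $g\in L^2(\mu)$ and, for each $x\in H$, complete the square by adding and subtracting $\overline{V}(x)$:
$$
\vert\mathcal X(x)-g(x)\vert^2=\vert\mathcal X(x)-\overline{V}(x)\vert^2+2\bigl(\mathcal X(x)-\overline{V}(x)\bigr)\bigl(\overline{V}(x)-g(x)\bigr)+\vert\overline{V}(x)-g(x)\vert^2.
$$
Taking expectation in $\omega$ for fixed $x$, the cross term vanishes because $\overline{V}(x)$ and $g(x)$ are deterministic and $\mathbb E[\mathcal X(x)-\overline{V}(x)]=0$ by the very definition of $\overline{V}$. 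This gives the pointwise identity $\mathbb E[\vert\mathcal X(x)-g(x)\vert^2]=\mathbb E[\vert\mathcal X(x)-\overline{V}(x)\vert^2]+\vert\overline{V}(x)-g(x)\vert^2$. Integrating against $\mu$ and using Tonelli's theorem to interchange $\mathbb E$ and $\int_H$ (all integrands being nonnegative), I obtain
$$
\mathbb E\!\left[\int_H\vert\mathcal X(x)-g(x)\vert^2\mu(dx)\right]=\mathbb E\!\left[\int_H\vert\mathcal X(x)-\overline{V}(x)\vert^2\mu(dx)\right]+\int_H\vert\overline{V}(x)-g(x)\vert^2\mu(dx).
$$
The first term on the right is finite by \eqref{cond:q-integration} together with $\overline{V}\in L^2(\mu)$, and it is independent of $g$; the second term is nonnegative and equals zero precisely when $g=\overline{V}$ $\mu$-almost everywhere. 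Hence $\overline{V}$ is a global minimiser.

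The steps that require genuine care, rather than the algebra, are the integrability bookkeeping. I must check that the cross term is absolutely integrable before discarding it, which follows from the Cauchy--Schwarz inequality since both $\mathcal X(\cdot)-\overline{V}(\cdot)$ and $\overline{V}(\cdot)-g(\cdot)$ lie in the relevant $L^2$ spaces; and I must justify the $\mathbb E$/$\int_H$ interchange, which Tonelli supplies thanks to nonnegativity and the finiteness guaranteed by \eqref{cond:q-integration} and $\overline{V},g\in L^2(\mu)$. The main subtlety to flag is uniqueness: the minimiser is determined only up to a $\mu$-null set, so the equality in the statement should be read in $L^2(\mu)$, with $\overline{V}$ serving as the canonical (Lipschitz continuous) representative of its equivalence class.
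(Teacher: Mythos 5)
Your proof is correct and is essentially the same argument as the paper's: both complete the square to decompose the objective as $\int_H\mathrm{Var}(\mathcal X(x))\,\mu(dx)+\int_H\vert\overline{V}(x)-g(x)\vert^2\,\mu(dx)$, with the first term independent of $g$. You merely present it as ``add and subtract $\overline{V}$'' rather than expanding and recombining, and your extra bookkeeping (Cauchy--Schwarz for the cross term, Tonelli for the interchange, uniqueness only $\mu$-a.e.\ in $L^2(\mu)$) makes explicit what the paper leaves implicit.
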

\begin{proof}
Suppose $g\in L^2(\mu)$: a direct calculation shows that
\begin{align*}
    &\mathbb E\left[\int_H\vert \mathcal X(x)-g(x)\vert^2\mu(dx)\right] \\
    &\qquad=\mathbb E\left[\int_H \mathcal X^2(x)\mu(dx)\right]-2\mathbb E\left[\int_H \mathcal X(x)g(x)\mu(dx)\right]+\int_H g^2(x)\mu(dx) \\
    &\qquad=\mathbb E\left[\int_H \mathcal X^2(x)\mu(dx)\right]-2\int_H \overline{V}(x)g(x)\mu(dx)+\int_H g^2(x)\mu(dx) \\
    &\qquad=\int_H\mathbb E[(\mathcal X^2(x)-\mathbb E[\mathcal X(x)]^2)]\mu(dx)+\int_H\vert \overline{V}(x)-g(x)\vert^2\mu(dx) \\
    &\qquad=\int_H\text{Var}(\mathcal X(x))\mu(dx)+\int_H\vert \overline{V}(x)-g(x)\vert^2\mu(dx).
\end{align*}
The result follows.
\end{proof}
In Prop. 2.2 of Beck et al. \cite{Beck2021}, a finite-dimensional analogue  of the above Lemma is shown. They use this as the motivation for approximating the minimizer by neural networks.  
Below we provide a rigorous argument for why one can minimize over the subset of "all" neural networks rather than $L^2(\mu)$:

First of all, we recall the following easy fact. If $(\mathcal T,\tau)$ is a topological space, $\mathcal U\subset \mathcal T$ such that $cl(\mathcal U)\cap \mathcal T =\mathcal T$ (i.e. $\mathcal U$ is dense in $\mathcal T$), and $I:\mathcal T\to\R$ is continuous, then 
$$
\inf_{\mathcal T} I = \inf_{\mathcal U} I \in [-\infty,\infty).
$$
Let us embed our current setup in this framework. 
\begin{lemma}
\label{Lemma:I-Lip}
Let $\mu:\mathcal{B}(H)\to [0,\infty]$ be a measure. Then the map 
$$
I:L^2(\mu)\to \R,\quad g \mapsto \mathbb E\left[\int_H\vert \mathcal X(x)-g(x)\vert^2\mu(dx)\right] 
$$
is locally Lipschitz continuous
\end{lemma}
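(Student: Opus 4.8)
The plan is to reduce everything to the Hilbert-space structure of $L^2(\mu)$ and then exploit the elementary difference-of-squares identity, keeping the quadratic character of $I$ front and center. First I would observe that the standing hypothesis \eqref{cond:q-integration} forces $\overline V\in L^2(\mu)$: by Jensen's inequality $\overline V(x)^2=\mathbb E[\mathcal X(x)]^2\le \mathbb E[\mathcal X(x)^2]$, so $\int_H\overline V^2(x)\mu(dx)\le \mathbb E[\int_H\mathcal X^2(x)\mu(dx)]<\infty$. The same condition also guarantees that $I(g)$ is genuinely finite for every $g\in L^2(\mu)$, since expanding the square and applying Tonelli/Fubini gives
\begin{equation*}
I(g)=\mathbb E\left[\int_H\mathcal X^2(x)\mu(dx)\right]-2\int_H\overline V(x)g(x)\mu(dx)+\int_H g^2(x)\mu(dx),
\end{equation*}
whose three terms are controlled respectively by \eqref{cond:q-integration}, by Cauchy-Schwarz together with $\overline V,g\in L^2(\mu)$, and by $g\in L^2(\mu)$.

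For the Lipschitz bound itself I would work directly with the difference rather than with the expanded form. Fixing $g,h\in L^2(\mu)$ and using the pointwise (in $x$ and $\omega$) identity $(\mathcal X-g)^2-(\mathcal X-h)^2=(2\mathcal X-g-h)(h-g)$, then taking expectations (so that $\mathbb E[\mathcal X(x)]$ turns into $\overline V(x)$) and integrating against $\mu$, Tonelli/Fubini yields
\begin{equation*}
I(g)-I(h)=\int_H\bigl(2\overline V(x)-g(x)-h(x)\bigr)\bigl(h(x)-g(x)\bigr)\mu(dx).
\end{equation*}
An application of the Cauchy-Schwarz inequality in $L^2(\mu)$ then gives
\begin{equation*}
\abs{I(g)-I(h)}\le \norm{2\overline V-g-h}_{L^2(\mu)}\norm{g-h}_{L^2(\mu)}\le \bigl(2\norm{\overline V}_{L^2(\mu)}+\norm{g}_{L^2(\mu)}+\norm{h}_{L^2(\mu)}\bigr)\norm{g-h}_{L^2(\mu)}.
\end{equation*}

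To conclude local Lipschitz continuity I would restrict to an arbitrary ball $B_R=\{g\in L^2(\mu):\norm{g}_{L^2(\mu)}\le R\}$: for $g,h\in B_R$ the prefactor is dominated by $L_R:=2\norm{\overline V}_{L^2(\mu)}+2R$, a finite constant depending only on $R$, so $\abs{I(g)-I(h)}\le L_R\norm{g-h}_{L^2(\mu)}$, which is precisely local Lipschitz continuity. The only real obstacle is bookkeeping rather than conceptual: one must justify the interchange of $\mathbb E$ and $\int_H(\cdot)\mu(dx)$ and verify absolute convergence of every integral encountered. The square terms are non-negative, so Tonelli applies directly; for the cross term I would bound $\mathbb E[\abs{\mathcal X(x)}]\le \mathbb E[\mathcal X^2(x)]^{1/2}$ and use Cauchy-Schwarz in the $x$-variable together with \eqref{cond:q-integration} and $g,h\in L^2(\mu)$ to certify absolute integrability before invoking Fubini.
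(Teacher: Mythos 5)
Your proof is correct and follows essentially the same route as the paper's: both rest on the difference-of-squares identity $(\mathcal X-g)^2-(\mathcal X-h)^2=(2\mathcal X-g-h)(h-g)$ followed by the Cauchy--Schwarz inequality in $L^2(\mu)$, with condition \eqref{cond:q-integration} supplying the finiteness needed. The only (harmless) variations are that you pass through $\overline V=\mathbb E[\mathcal X(\cdot)]$ via Jensen before applying Cauchy--Schwarz, where the paper bounds $2\,\mathbb E\bigl[\int_H \mathcal X^2(x)\mu(dx)\bigr]^{1/2}$ directly, and that you make the restriction to balls $B_R$ and the Fubini justification explicit, which the paper leaves implicit.
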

\begin{proof}
Denote by $\Vert\cdot\Vert_{\mu}$ the norm in $L^2(\mu)$. Observe that 
$$
\vert a-b\vert^2-\vert a-c\vert^2=2a(c-b)+(b+c)(b-c)
$$
for any constants $a,b,c\in\mathbb R$. Hence, for
$a=\mathcal X(x)$, $b=g(x)$ and $c=h(x)$ for $g,h\in L^2(\mu)$, we find 
\begin{align*}
\vert I(g)-I(h)\vert&\leq2\mathbb E\left[\int_H\vert \mathcal X(x)\vert\vert g(x)-h(x)\vert\mu(dx)\right]+\int_H\vert g(x)+h(x)\vert\vert g(x)-h(x)\vert\mu(dx) \\
&\leq 2\mathbb E\left[\int_H\mathcal X^2(x)\mu(dx)\right]^{1/2}\int_H\vert g(x)-h(x)\vert^2\mu(dx)^{1/2} \\
&\qquad+\int_H\vert g(x)+h(x)\vert^2\mu(dx)^{1/2}\int_H\vert g(x)-h(x)\vert^2\mu(dx)^{1/2}
\end{align*}
after using the Cauchy-Schwarz inequality. The claim follows by assumption \eqref{cond:q-integration}. 
\end{proof}

Therefore, a fortiori, the map $I$ defined in Lemma \ref{Lemma:I-Lip} above is continuous on $CL^2_{lip}(\mu)$. We now fix a compact subset $K\subset H$, and assume that $\Supp \mu = K$ and $\mu(K)=1$. 
Recalling our standard assumptions on the neural network from Section \ref{nn:frechet} (in particular assuming that the activation function $\sigma$ is Lipschitz, so that the resulting neural networks become Lipschitz), our abstract result Prop. \ref{prop: density} ensures that 
$$
cl(\mathfrak{N}(\sigma)) \cap CL^2_{lip}(\mu) = CL^2_{lip}(\mu)
$$
with respect to the topology of convergence on compacts. Since $\mu(K)<\infty$, it follows that this last equation holds good even with respect to the $L^2(\mu)$-topology. Therefore,
$$
I(\overline{V}) =  \min_{L^2(\mu)} I  = \min_{ CL^2_{lip}(\mu)} I = \inf_{ CL^2_{lip}(\mu)} I = \inf_{\mathfrak{N}(\sigma)} I. 
$$
Hence, we have justified the procedure in finite dimensions of Beck et al. \cite{Beck2021} for approximating the minimizer by neural networks. And even more, we have demonstrated that $\bar{V}$ can be approximated arbitrary well by our infinite-dimensional  neural networks.

Observe that if we consider the function $\R\ni \varepsilon\mapsto I(g+\varepsilon v)$ with $g,v\in L^2(\mu)$ and we compute its second derivative we get $2\norm{v}^2_{L^2(\mu)}$, showing that $I$ is (strictly) convex on $L^2(\mu)$.

It might be worth noticing that for each $g\in L^2(\mu)$ it holds in view of H\"older's inequality
\begin{equation*}
    \begin{split}
        I(g) &= \mathbb E\left[\int_H\mathcal X^2(x)\mu(dx)\right]-2\mathbb E\left[\int_H \mathcal X(x)g(x)\mu(dx)\right]+\int_H g^2(x)\mu(dx) \\
        & \geq 
        \mathbb E\left[\int_H\mathcal X^2(x)\mu(dx)\right]-2\left(\mathbb E\left[\int_H \mathcal X(x)^2\mu(dx)\right]\right)^{1/2}\, \norm{g}_{L^2(\mu)}  + \norm{g}_{L^2(\mu)}^2  
    \end{split}
\end{equation*}
and thus $I(g)\to\infty$ as $\norm{g}_{L^2(\mu)}\to \infty$, i.e. $I$ is coercive on $L^2(\mu)$. Moreover, from the above we see that
$$
I(g) - I(\overline{V}) = \norm{g - \overline{V}}_{L^2(\mu)}^2,\quad g \in L^2(\mu).
$$
Therefore, we get the following "sharp" approximation error: given an arbitrary $\varepsilon>0$, by definition of $\inf$, we may find $g_\varepsilon\in\mathfrak{N}(\sigma)$ such that $I(g_\varepsilon)-I(\overline{V})<\varepsilon^2$. This in turn produces
$$
\norm{g_\varepsilon - \overline{V}}_{L^2(\mu)} < \varepsilon.
$$
Thus, if we have found a neural network $g_\varepsilon$ whose energy $I$ is at most $\varepsilon^2$ away from that of the infimum, then this $g_\varepsilon$ is away from $\overline{V}$ at most $\varepsilon$ in the $L^2$-sense. 

Following the idea of Beck et al. \cite{Beck2021}, the procedure now is to substitute any $g$ with a neural network approximation, and minimize over these instead (we can assess this error as the error between a given $g$ and an approximated neural net). As neural nets approximate over compacts, we let $\mu$ be a measure supported on a compact in $H$, let us say $K\subset H$. Then, we fit the network by virtue of 
\begin{equation}
\label{eq:inf-problem-nn}
\inf_{g\in\mathfrak{N}(\sigma)}\mathbb E\left[\int_H\vert \mathcal X(x)-g(x)\vert^2\mu(dx)\right] \equiv \inf_{g\in\mathfrak{N}(\sigma)}I(g).
\end{equation}
If $\mu(K)<\infty$, we can scale it to achieve a probability measure on $K$, so that we can draw random samples $x^{(1)},\ldots,x^{(N)}$ from it. Equally, randomly drawing $M$ samples of $\mathcal X(x^{(i)})$ for each $x^{(i)}$, denoted $\chi^{(j)}(x^{(i)})$, we find
\begin{equation}
\label{eq:stoch-gradient-descent}
\inf_{g\in\mathfrak{N}(\sigma)}\frac1{N+M}\sum_{j=1}^M\sum_{i=1}^N\vert \chi^{(j)}(x^{(i)})-g(x^{(i)})\vert^2
\end{equation}
The underlying intuition here is that we are using the strong law of large numbers applied to the probability space $(\Omega\times K,\P \otimes \mu)$ in order to approximate the minimization problem \eqref{eq:inf-problem-nn} by 
\eqref{eq:stoch-gradient-descent}. Alternatively, one could sample jointly from $\mu$ and $\mathcal X$, reducing the double summation above to just one. In fact, this is what we do in the numerical studies.  

If we have given an ONB $\{e_i\}_{i\in\mathbb N}$  on $H$, then we can define a compact set $K_L$ as
\begin{equation}\label{compact:set:K}
K_L:=\{x\in H; x=\sum_{k=1}^L x_k e_k, x_k\in[a_k,b_k], a_k<b_k],
\end{equation}
and a natural measure $\mu$ is the uniform measure on $K_L$. We trivially extend this to the whole Borel $\sigma$-algebra of $H$. We can accomplish this because $\mathcal{B}(K_L)=\mathcal{B}(H)\cap K_L$.

\begin{remark}
In derivatives markets, one is concerned with the Greeks, that is, the sensitivities of the option price with respect to changes in the underlying. Such sensitivities are important in hedging questions. If we have trained a neural network $g\in\mathfrak{N}(\sigma)$ to the pricing functional $\overline{V}$, we have accessible fast and efficient ways to derive sensitivities with respect to structural changes in the current term structure. The so-called Delta of an option is the derivative of the option price with respect to the underlying asset price. In our context, the "asset price" is the complete term structure curve, and we can be interested in many different "Deltas", like for example a shift in the term structure. If the first basis function $e_1\in H$ measures the level of the term structure, then, if the activation function is (Fr\'echet) differentiable, we can find an analytic expression for $\partial_{1}g(x)$, where $\partial_1$ is the derivative with respect to the first coordinate of $x$. This provides us with an efficiently computable approximation of the "level-Delta" of the option.   
\end{remark}

\section{Numerical pricing by feedforward neural networks in Hilbert space}\label{sec:case:study}
We now test the numerical procedure introduced in Section~\ref{sec:approx:price:fct} for pricing options on the forward curve. As a state space we use the Filipovi\'c space, denoted by $H_w$, introduced in Example~\ref{exp:Filipovic:space}.  

We will first need to derive a basis for this space. For this, let $\{\tilde{e}_i\}_{i\in \mathbb{N}}$ defined by $\tilde{e}_1(\tau)=1$, and for $k>1$
\begin{equation}
\label{eq:def-indep-vectors}
\tilde{e}_i(\xi)=\xi^{(i-2)}\exp(-\xi).
\end{equation}
Then $\{\tilde{e}_i\}_{i\in \mathbb{N}}$ is a set of linearly independent vectors in $H_w$. Observe that the first vector is the {\it level} of the term structure, while the second and third vectors can be associated with {\it slope} and {\it curvature}.

In order to derive from $\{\tilde{e}_i\}_{i\in \mathbb{N}}$ a set of orthonormal vectors we specify the weight function $w(\xi):=\exp( \xi)$ in $H_w$.
We can now apply the Gram-Schmidt algorithm to obtain an orthonormal basis. In the next Lemma we present the 7 first basis vectors resulting from this algorithm (which will be used in our numerical studies): 
\begin{lemma}
From $\{\tilde{e}_i\}_{i\in\mathbb N}$ in \eqref{eq:def-indep-vectors} the first 7 vectors in the orthonormal basis $\{e_i\}_{i\in\mathbb N}$ derived by the Gram-Schmidt algorithm are
\begin{align*}
e_1(\xi)&=1 \\
e_2(\xi )&=\exp(-\xi) -1 \\
e_3(\xi)&=\xi\exp(-\xi) \\
e_4 (\xi)&= \frac{1}{2} \left( \xi^2-2  \xi\right)e^{-\xi} \\
e_5 (\xi)&= \frac{\xi^3-36 \xi^2-6 \xi}{42 \sqrt{5}} e^{-\xi}\\
e_6 (\xi)&= \frac{ \xi^4-1440 \xi^3-192 \xi^2-24 \xi}{24 \sqrt{806115}}e^{-\xi}\\
e_7 (\xi)&= \frac{ \xi^5-100800 \xi^4-10800 \xi^3-1200 \xi^2-120 \xi}{1560 \sqrt{49407661}}e^{-\xi}.
\end{align*}
\end{lemma}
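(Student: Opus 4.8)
The statement is established by a direct, finite computation: I would run the Gram--Schmidt procedure on $\tilde e_1,\dots,\tilde e_7$ with respect to the inner product of $H_w$ and check that the normalized outputs coincide exactly with the listed $e_1,\dots,e_7$. The procedure is well defined because the $\tilde e_i$ are linearly independent (as already observed) and lie in $H_w$: for $i\geq 2$ each $\tilde e_i'$ is a polynomial times $e^{-\xi}$, so $\int_0^\infty e^\xi\,(\tilde e_i')^2\,d\xi<\infty$ once we fix $w(\xi)=e^\xi$.

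The key preparatory step is to record the inner product attached to the norm of Example~\ref{exp:Filipovic:space} with the specific choice $w(\xi)=e^\xi$, namely
\begin{equation*}
\langle x,y\rangle_w=x(0)y(0)+\int_0^\infty e^\xi\,x'(\xi)\,y'(\xi)\,d\xi,
\end{equation*}
and to exploit the following structural feature: since $\tilde e_i(\xi)=\xi^{\,i-2}e^{-\xi}$ for $i\geq 2$, the derivative $\tilde e_i'(\xi)$ is a polynomial multiplied by $e^{-\xi}$, so that $e^\xi\,\tilde e_i'(\xi)\,\tilde e_j'(\xi)$ is again a polynomial times $e^{-\xi}$. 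Consequently every inner product I need collapses to a finite linear combination of the elementary moments
\begin{equation*}
\int_0^\infty \xi^{n}e^{-\xi}\,d\xi=n!,\qquad n\in\N,
\end{equation*}
together with the boundary term $x(0)y(0)$, which is nonzero only when $\tilde e_1$ is involved. This turns the entire problem into integer arithmetic with factorials.

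With this reduction in place I would carry out the recursion: at stage $k$ set $v_k=\tilde e_k-\sum_{j<k}\langle \tilde e_k,e_j\rangle_w\,e_j$ and then put $e_k=v_k/\vert v_k\vert_w$, evaluating each coefficient by the factorial rule above. The first steps are transparent and fix the pattern: $\tilde e_1=1$ has unit norm, so $e_1=1$; because $e_1'\equiv 0$, subtracting the $e_1$-component only removes the value at $0$, which is why $e_2=e^{-\xi}-1$ appears; and for every $k\geq 3$ all vectors vanish at $\xi=0$, so the boundary term drops out and only the integral part of $\langle\cdot,\cdot\rangle_w$ survives.

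The main obstacle is not conceptual but purely a matter of bookkeeping and error control. As $k$ increases, the polynomial prefactors of $v_k$ grow in degree while the projection coefficients and normalizing constants become large integers (this is the origin of the numbers $36$, $42\sqrt5$, $806115$, $100800,\dots$ entering $e_5,e_6,e_7$), so the factorial sums must be tabulated with care. A convenient independent sanity check, which I would include, is to verify orthonormality directly by computing $\langle e_i,e_j\rangle_w$ for the produced vectors and confirming that it equals $\delta_{ij}$; this catches any arithmetic slip without redoing the elimination.
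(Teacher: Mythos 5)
Your proposal is correct and takes essentially the same route as the paper: the paper's proof likewise runs Gram--Schmidt in $H_w$ with $w(\xi)=e^{\xi}$, reduces every inner product to the moments $\int_0^{\infty}\xi^{k}e^{-\xi}\,d\xi=k!$, computes $e_1,e_2,e_3$ explicitly, and settles the remaining four vectors by ``tedious technical calculations.'' Your two additions --- the observation that the boundary term $x(0)y(0)$ drops out from stage $k\geq 3$ onward, and the independent sanity check $\langle e_i,e_j\rangle_w=\delta_{ij}$ --- slightly refine the paper's write-up but do not change the method.
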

\begin{proof}
First, recall that 
$$\int_0^{\infty} \xi^k \exp (-\xi)\,d\xi=\Gamma (k+1)=k !.$$
For $i=1$, let $\hat{e}_1 :=\tilde{e}_1=1$ and using that $\vert\hat{e}_1\vert_w=1$, we define $e_1=\hat{e}_1/\vert\hat{e}_1\vert_w=1$. Next we calculate 
 \begin{eqnarray}
    \hat{e}_2 &:=& \tilde{e}_2 - \Pi_{\hat{e}_1}(\tilde{e}_2)=\tilde{e}_2 - \langle \tilde{e}_2, \hat{e}_1 \rangle_w \hat{e}_1=\tilde{e}_2 - \hat{e}_1 \nonumber
    \end{eqnarray}
and using that $\vert\hat{e}_2\vert_w^2=\int_0^{\infty} \exp (-\xi) d \xi=\Gamma (0+1)=1$ we define $$e_2(\xi ):=\hat{e}_2(\xi )/\vert\hat{e}_2\vert_w=\exp(-\xi) -1.$$
For $i=3$,
       \begin{eqnarray}
    \hat{e}_3 &:=& \tilde{e}_3 - \Pi_{\hat{e}_1}(\tilde{e}_3)-\Pi_{\hat{e}_2}(\tilde{e}_3)=\tilde{e}_3 - \langle \tilde{e}_3, \hat{e}_1 \rangle_w \hat{e}_1 - \langle \tilde{e}_3, \hat{e}_2 \rangle_w \hat{e}_2\nonumber
    \end{eqnarray}
    which gives $\hat{e}_3 (\xi)=\xi \exp(-\xi)$. Furthermore, $\vert \hat{e}_3 \vert_w^2= \Gamma (2+1) -2 \Gamma (1+1) +\Gamma (0+1)=1$, and thus $e_3:=\hat{e}_3$. Continuing with this procedure, tedious technical calculations yield the remaining 4 basis vectors.
\end{proof}
See Figure~\ref{basis} for a plot of the functions $\tilde{e}_1,...,\tilde{e}_{7}$ and the orthonormal functions $e_1,...,e_{7}$. 

\begin{figure}[t]
\hfill\subfigure[]{\includegraphics[width=0.45\textwidth]{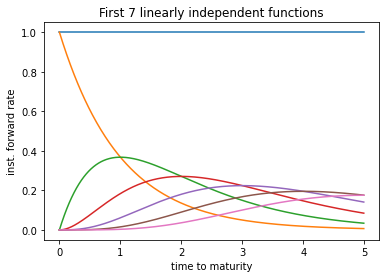}\label{lin:indep:fct}} 
    \hfill \subfigure[]{\includegraphics[width=0.45\textwidth]{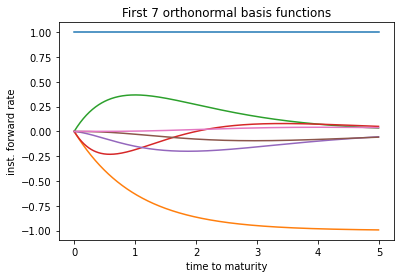}\label{basis:fct}} 
\caption{(a) Functions $\tilde{e}_1,\dots,\tilde{e}_7$. (b) Functions $e_1,\dots,e_7$}
\label{basis}
\end{figure}

We consider a specification of the exponential model (\ref{eq:explicit-price-non-state-dep}) and (\ref{eq:markovian-struct-exponential}) for the forward curve, i.e., we assume that 
the instantaneous forward curve is described by 
$X(t)=\exp(Y(t))$ with $Y$ as in \eqref{eq:markovian-struct-exponential}. The price of the instantaneous forward delivering at the fixed time $T$ can then be obtained by $F(t,T)=\delta_{T-t}X(t)$. We consider a call option on a flow forward with strike $\mathcal{K}$, that is, we introduce the payoff function $\mathfrak{P}(y):=\max \{ z-\mathcal{K}, 0\}$. Recalling the flow forward price $\hat{F}(t,T_1,T_2)$ in \eqref{def:flow-forward}, the option pays at time $\tau\leq T_1$ the amount $\mathfrak{P}(\widehat{F}(\tau,T_1,T_2)$. Following the derivations in Sections \ref{sec:option} and \ref{sec:approx:price:fct}, we recall the price as $\overline{V}(x):=\mathbb{E} [\mathcal{X}(x)]$ with $\mathcal X(x)$ defined in \eqref{eq:X-def-geom}. In our numerical case studies, we shall fix our attention to options on monthly flow forwards, with exercise time at the beginning of the delivery period. More specifically, we assume $\tau=T_1=1/12$ and $T_2=2/12$, with time being measured in months. We price the option at current time $t=0$ with a strike of $\mathcal{K}=1$. We set the risk free interest rate $r=0$.

Following Section~\ref{sec:approx:price:fct} we want to obtain a numerical approximation of $\overline{V} : H_w \rightarrow \mathbb{R}$. As described in Section~\ref{sec:approx:price:fct} we can only expect to approximate the true pricing function $\overline{V}$ well on some compact subset of $H_w$. As proposed in (\ref{compact:set:K}) we choose $K_{7}\subset H$ by 
$$K_{7}:=\{x\in H_w; x=\sum_{k=1}^{7} x_i e_i, x_i\in[-1/2,1/2] \}.$$ The functions $e_8,e_9,\dots $ have very low absolute value in the short end ($\xi \leq 2/12$,  corresponding to the first two month), which is the reason why we only use the subspace spanned by $e_1,\dots , e_7$. Let $\mu$ be the uniform measure on $K_7$.

We now describe the practical implementation of the neural network architecture explained in Section~\ref{nn:frechet}. By Proposition~\ref{finite:dim:projection:NN} we know that for any given $\delta>0$ and a discriminatory $\sigma \in C(H_w,H_w)$ there exists $N_\ast=N_\ast(p, \sigma,\delta)\in\N, M \in \mathbb{N}$ and $A_j \in \mathcal{L}(H_w), b_j \in H_w$ for $j=1,\dots , M$ such that for any $N\geq N_\ast$ 
\begin{equation*}
    \sup_{x\in K}\abs{\overline{V}(x)-\sum_{j=1}^M\langle \ell_j\circ\Pi_N,\sigma(
    \Pi_{N}A_j\Pi_{N}x+\Pi_{N}b_j)\rangle}< \delta,
\end{equation*}
where $\Pi_N : \mathcal{X}\rightarrow \Span \{ e_1, \dots , e_N \}$ is the orthogonal projection. Let us now fix such $N\geq \max\{N_\ast, 7\} $. We define $\bar{A}_j$ to be the restriction of $\Pi_{N}A_j$ to $H_w^N:=\Span \{ e_1, \dots , e_N \}$ and $\bar{b}_j:=\Pi_{N}b_j$ for $j=1,\dots , M$. Then, clearly $\bar{A}_j \in \mathcal{L} (H_w^N)$ and $\bar{A}_j$ can be associated with a matrix $\mathbb{R}^{N\times N}$. Further $\bar{b}_j \in H_w^N$ and it can be associated with an element in $\mathbb{R}^{N}$. We further define $\bar{\ell}_j\in \mathcal{L} (H_w^N,\mathbb{R})$ for $j=1,\dots , M$ to be the restriction of $\ell_j$ to $H_w^N$. If now our activation function $\sigma $ has an image $\text{Im}(\sigma)$ such that $\text{Im}(\sigma)\subset H_w^N$, then the restriction of $\sigma$ to $H_w^N$ defines an element $\bar{\sigma}\in C(H_w^N,H_w^N)$. We obtain that for $x\in H_w^N$ it holds that 
\begin{equation}
\langle \ell_j\circ\Pi_N,\sigma( \Pi_{N}A_j\Pi_{N}x+\Pi_{N}b_j)\rangle=\langle \bar{\ell}_j ,\bar{\sigma} ( \bar{A}_j x+\bar{b}_j)\rangle.
\end{equation}
In particular all quantities appearing on the right hand side are finite dimensional. After projecting onto the subspace $H_w^N$, the one layer neural network becomes: 
\begin{equation}
\sum_{j=1}^M\langle \bar{\ell}_j ,\bar{\sigma} ( \bar{A}_j x+\bar{b}_j)\rangle.
\end{equation}

In Figure~\ref{NN} we present a picture of the neural network architecture. We implement this neural network in Python using the TensorFlow and Keras libraries. The code is available on GitHub\footnote{\href{https://github.com/ncdetering/FlowForwardsNumerics}{https://github.com/ncdetering/FlowForwardsNumerics}}. In contrast to classical neural networks where every node has an output dimension of $1$, in our network every node in the main affine layer computes the map $\bar{A}_j x+\bar{b}_j$ with output dimension $N$. In our implementation we therefore represent each of these nodes as a dense layer. Each of these dense layers then has input and output dimension equal to $N$ and is connected to every node in the input layer. This means that the hidden layer depicted in Figure~\ref{NN} actually consists of $M$ vertically arranged dense layers, each with input and output dimension equal to $N$. Similarly each node in the activation layer receives input of dimension $N$ and produces output of dimension $N$. We use $M$ vertically arranged Keras Lambda layers to implement this tailor made activation layer. The nodes in the linear forms receive $N$ dimensional input and produce $1$ dimensional output. Internally this is implemented as $M$ vertically arranged dense layers with input dimension $N$ and output dimension equal to $1$. The single node in the last layer is just implementing the summation. 

The activation function $\sigma$ that we use follows Example 4.4 in Benth, Detering and Galimberti \cite{benth2021neural}. Specifically, we choose it of the form $\sigma (x)=\beta(\psi (x)) z$ for some map $\psi \in \mathcal{L} ( H_w, \mathbb{R})$, $z\in H_w$ and $\beta\in Lip(\R;\R)$ such that $\lim_{\xi\to\infty} \beta(\xi)=1,\;\lim_{\xi\to-\infty} \beta(\xi)=0$ and $\beta(0)=0$. To ensure that $\text{Im}(\sigma)\subset H_w^N$ we choose $z\in H_w^N$ but we note that we could start with a general $z\in H_w$ (before determining $N$) and then after determining $N$ for the finite network implementation, we replace it with $\Pi_N (z)$. We further specify $\beta(y):= \max \{0,1-\exp (-y) \}$. Because $\bar{\sigma}=\beta (\psi (\Pi_N (x)))\Pi_N (z)$ we can interpret $\psi$ as a map in $\mathcal{L} ( \mathbb{R}^N, \mathbb{R})$ that acts on the first $N$ coefficients of $x$.

 \begin{figure}[t]\label{NN}
	\centering
	\begin{tikzpicture}[shorten >=1pt]
		\tikzstyle{unit}=[draw,shape=circle,minimum size=1.15cm]
		\tikzstyle{hidden}=[draw,shape=circle,minimum size=1.15cm]
 
		\node[unit](x0) at (0,3.5){$x_1$};
		\node[unit](x1) at (0,2){$x_2$};
		\node at (0,1){\vdots};
		\node[unit](xd) at (0,0){$x_N$};
 
		\node[hidden](h10) at (3,4.5){$\bar{A}_1 \mathbf{x} + \bar{b}_1$};
		\node[hidden](h11) at (3,2.5){$\bar{A}_2 \mathbf{x} + \bar{b}_2$};
		\node at (3,1.5){\vdots};
		\node[hidden](h1m) at (3,-1){$\bar{A}_M \mathbf{x} + \bar{b}_M$};
 
		\node(h22) at (5,0){};
		\node(h21) at (5,2){};
		\node(h20) at (5,4){};
		
 
		
		\node[hidden](hL0) at (6,4.5){$\bar{\sigma}$};
		\node[hidden](hL1) at (6,2.5){$\bar{\sigma}$};
		\node at (6,1.5){\vdots};
		\node[hidden](hLm) at (6,-1){$\bar{\sigma}$};
		
		\node[hidden](l0) at (9,4.5){$\bar{\ell}_1$};
		\node[hidden](l1) at (9,2.5){$\bar{\ell}_2$};
		\node at (9,1.5){\vdots};
		\node[hidden](lm) at (9,-1){$\bar{\ell}_M$};
 
		\node[unit](sum) at (12,2){$\sum$};

		\draw[->,red] (x0) -- (h10);
		\draw[->,red] (x0) -- (h11);
		\draw[->,red] (x0) -- (h1m);
 
 		\draw[->,red] (x1) -- (h10);
 		\draw[->,red] (x1) -- (h11);
		\draw[->,red] (x1) -- (h1m);
 
  		\draw[->,red] (xd) -- (h10);
		\draw[->,red] (xd) -- (h11);
		\draw[->,red] (xd) -- (h1m);
 
		\draw[->,blue] (hL0) -- (l0);
 
		\draw[->,blue] (hL1) -- (l1);
 
		\draw[->,blue] (hLm) -- (lm);
 
		\draw[->,blue] (h10) -- (hL0);
		
		\draw[->,blue] (h11) -- (hL1);
		
		\draw[->,blue] (h1m) -- (hLm);
		
		 \draw[->] (l0) -- (sum);
		  \draw[->] (l1) -- (sum);
		   \draw[->] (lm) -- (sum);
		
		
		
		\draw [decorate,decoration={brace,amplitude=10pt},xshift=-4pt,yshift=0pt] (-0.5,4) -- (0.75,4) node [black,midway,yshift=+0.6cm]{input layer};
		\draw [decorate,decoration={brace,amplitude=10pt},xshift=-4pt,yshift=0pt] (2.5,5.3) -- (3.75,5.3) node [black,midway,yshift=+0.6cm]{affine layer};
		\draw [decorate,decoration={brace,amplitude=10pt},xshift=-4pt,yshift=0pt] (5.5,5) -- (6.75,5) node [black,midway,yshift=+0.6cm]{activation layer};
		\draw [decorate,decoration={brace,amplitude=10pt},xshift=-4pt,yshift=0pt] (8.5,5) -- (9.75,5) node [black,midway,yshift=+0.6cm]{linear form};
		\draw [decorate,decoration={brace,amplitude=10pt},xshift=-4pt,yshift=0pt] (11.5,2.8) -- (12.75,2.8) node [black,midway,yshift=+0.6cm]{output layer};
	\end{tikzpicture}
	\caption{Illustration of the structure of the neural network. A red arrow between two nodes means that one dimensional information at the tail of the arrow (here the values $x_k$) is carried to a node at the head of the arrow in which this one dimensional information is internally processed by a function of output dimension $N$. A blue arrow between two nodes means that it carries $N$ dimensional information from the tail to a node that has also $N$ dimensional output dimension. A black arrow corresponds to the standard setting where one dimensional information is carried to a node that has a one dimensional output.}
	\label{fig:multilayer-perceptron}
\end{figure}
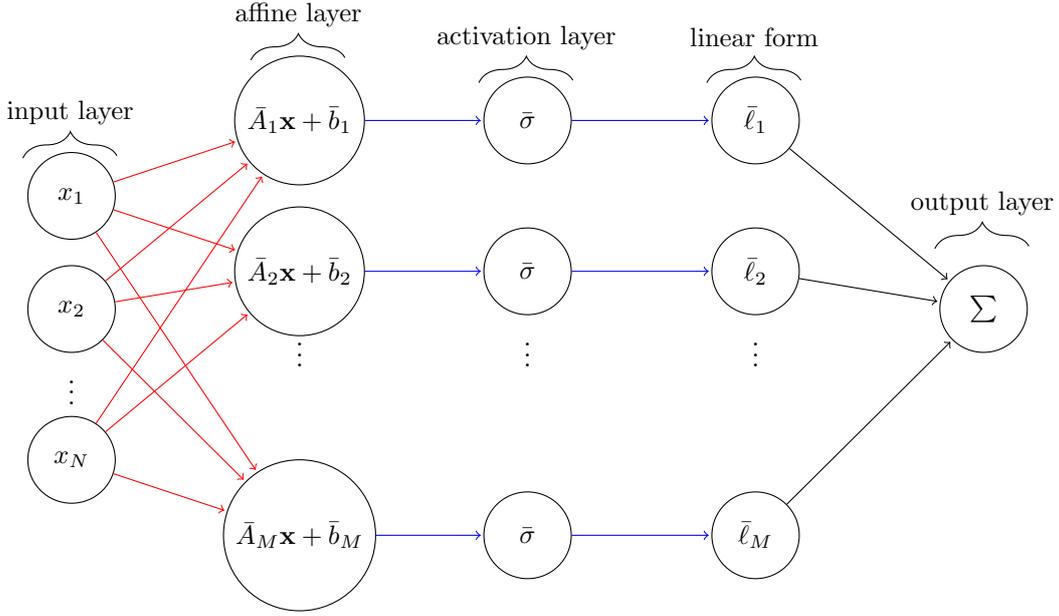
We consider two specifications for the coefficients and for the Wiener process in \eqref{eq:explicit-price-non-state-dep} defining the exponential model in (\ref{eq:markovian-struct-exponential}):

\subsection*{One dimensional noise:} In this first example we consider a setup with a simple one dimensional noise. To specify the coefficients in the mild solution \eqref{eq:explicit-price-non-state-dep}, we choose $W(t)=B(t) e_1$
with $B$ a standard Brownian motion. Then it follows for the covariance operator $\mathcal{Q}$ of $W$ that $\mathcal{Q}=\Pi_{H^1_w}$ where $H^1_w=\text{span} \{ e_1 \}$. Moreover we choose $\eta (s)= \textbf{id}$. Then $\alpha  (s)=-(1/2) e_1$ ensures that we obtain an arbitrage free model for the forward curve dynamics.

Then, for $x=\sum_{k=1}^{7} x_k e_k$ we get that 
$$\mathcal{S}_t x=x_1 e_1(\cdot) + \sum_{k=2}^{7} x_k e_k (\cdot +t)$$
and thus $Y(t)(\xi)=x_1 -(1/2) t + B(t)-B(0)+ \sum_{k=2}^{7} x_k e_k(\xi+t)$.

\noindent 
\subsection*{Multi dimensional noise:} Let the Wiener process now be given by 
$$W(t)=B_1(t) e_1 + \dots + B_7(t) e_7$$ with $B_i, 1 \leq i\leq 7$ independent standard Brownian motions and let $\eta=\eta^*=\textbf{id}$.  We need to calculate the drift $\alpha$ that leads to an arbitrage free model. According to (\ref{arb:free:drift}) the drift is $$\alpha(t,\xi)=-\frac12\vert \mathcal Q^{1/2}\eta^*(t)(\delta_{\xi}^*1)\vert_w^2 .$$ 
By definition of $W$, we then get that $\mathcal{Q}=\mathcal{Q}^{1/2}=\textbf{id}\circ \Pi_{K_7}=\Pi_{K_7}$ with $K_7=\text{span} \{e_1, \dots , e_7\}$ and thus by $\delta_{\xi}^*1=h_{\xi}$ we get that 
$$\alpha(t,\xi)=-\frac12\vert  \Pi_{K_7} (h_{\xi})\vert_w^2 .$$
with
$$ h_{\xi}(u)=1+\int_0^{\xi\wedge u}w^{-1}(v)dv .$$
Because $h_{\xi}$ is the representation for $\delta_{\xi}$, it follows that $\Pi_{K_7} (h_{\xi})=e_1 (\xi)\cdot e_1 + \dots + e_7 (\xi)\cdot e_7$ and therefore 
by orthonomality of $e_1,\dots, e_7$
$$\alpha(t,\xi)=-\frac12\vert  \Pi_{K_7} (h_{\xi})\vert_w^2 =-\frac12 ( e_1 (\xi)^2 + \dots + e_7 (\xi)^2).$$
It follows that 
\begin{eqnarray}
Y(t)&=&\mathcal S_tY_0+\int_0^t\mathcal S_{t-s}\alpha ds+\int_0^t\mathcal S_{t-s}\eta dW(s) \nonumber \\
&=& Y_0 (\cdot + t)+\int_0^t  \alpha (\cdot + (t-s)) ds+\int_0^t\mathcal S_{t-s} dW(s) \nonumber \\
&=& Y_0 (\cdot + t)- \frac12 \int_0^t e_1 (\cdot + (t-s))^2 + \dots + e_7 (\cdot + (t-s))^2 ds\nonumber\\
&  & +\sum_{i=1}^7 \int_0^t e_i (\cdot + (t-s)) dB_i (s)\nonumber
\end{eqnarray}
We are pricing an option with maturity in one month and one month delivery. So we need to simulate from $Y(1/12)$. For this we fix a discretization size $L$ and put $0=s_0 < s_1 < \dots < s_L=t$ and approximate  
$$- \frac12 \int_0^t e_1 (\cdot + (t-s))^2 + \dots + e_7 (\cdot + (t-s))^2 ds \approx - \frac12 \sum_{i=1}^7 \sum_{j=1}^L e_i (\cdot + (t-s_j))^2 (s_j - s_{j-1})$$

$$\sum_{i=1}^7 \int_0^t e_i (\cdot + (t-s)) dB_i (s)\approx \sum_{i=1}^7 \sum_{j=1}^L e_i (\cdot + (t-s_j))(B_i (s_j) - B_i (s_{j-1}) ).$$



We then generate our training set of size $n=10,000,000$. For this let $$x^{(k)}=(x^{(k)}_1,\dots , x^{(k)}_{7} ) \in [-1/2,1/2]^7 $$ for $k=1,\dots , n$ be the i.i.d. realizations of coefficients for our initial curves so that each $x^{(k)}$ can be seen as sampled from $\mu$.

We now compute the payoff $\mathcal{X}(x^{(k)})$ of the option as defined in \eqref{eq:X-def-geom} under both the one- and multi-dimensional noise models described above based on one realization of the respective Wiener processes $W$. So $\mathcal{X}(x^{(k)})$ is computed as the time $\tau=1/12$ value under the model (\ref{eq:explicit-price-non-state-dep}) and (\ref{eq:forward-HJM-geometric}) with the initial curve (starting value) $x^{(k)}$ and based on a realization of the process $W$ that is independent of the starting values $x^{(k)}$. We stress that by $x^{(k)}$ we actually mean the vector $\sum_{i=1}^7x_i^{(k)}e_i\in H_w$. Our training set then consists of the input-output pairs $(x^{(k)}, \mathcal{X}(x^{(k)}) \in [-1/2,1/2]^7\times \mathbb{R}$ for $k=1,\dots , n$. 

We use this training set to fit several neural networks with the number of hidden nodes ranging from $1$ to $30$. For the training with stochastic gradient descent we use a batch size of $10,000$ and $50$ epochs. We then generate a test set of size $10,000$. In contrast to the training set, the test set is composed of pairs consisting of starting values $x_{\text{test}}^{(k)}$, and option prices $\mathbb{E} [\mathcal{X}(x_{\text{test}}^{(k)})]$ for $k=1,\dots , 10,000$. The calculation of the benchmark prices $\mathbb{E} [\mathcal{X}(x^{(k)})]$ is based on a Monte Carlo simulation with $100,000$ simulations each. We then calculate for each of the fitted models the mean squared error with respect to the test set. 

The results are shown in Figure~\ref{MSE:network:size} for both, the model with one dimensional noise, and the model with $7$ dimensional noise. Except for networks with a very small number of nodes ($\leq 5$), the mean squared error is of the order $10^{-5}$ both for the specification with one dimensional noise and for the one with $7$ dimensional noise. For the one dimensional noise setting the average error over the networks with the number of nodes ranging from $10$ to $30$ is $2.047 \cdot 10^{-5}$ and for the multi dimensional noise it is $1.859\cdot 10^{-5}$. This error is fluctuating due to randomly assigned initial network weights but is always within the range $1.5\cdot 10^{-5}$ to $2.5\cdot 10^{-5}$.

\begin{figure}[t]
\hfill\subfigure[]{\includegraphics[width=0.45\textwidth]{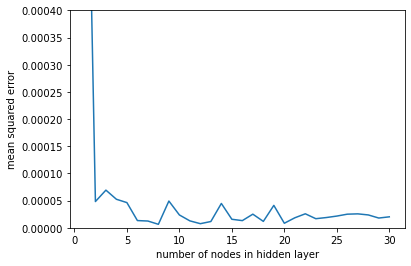}\label{MSE:network:size:1dim}} 
    \hfill \subfigure[]{\includegraphics[width=0.45\textwidth]{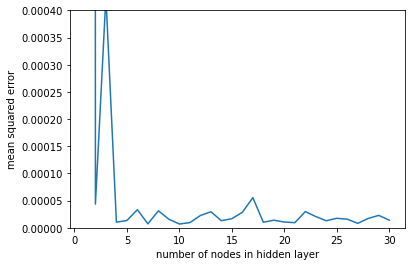}\label{MSE:network:size:7dim}} 
\caption{Mean squared error for network architecture presented in Section~\ref{nn:frechet} with number of neurons ranging from $1$ to $30$ (a) One dimensional noise specification. (b) Multi dimensional noise specification.}
\label{MSE:network:size}
\end{figure}
In Figure~\ref{x1approx} we plot the option price for initial log forward curves with varying basis coefficient $x_1$ and all other coefficients are set to $0$. The green line describes the option price as determined by the neural network while the red stars are option prices as determined by Monte Carlo simulations. Similarly in Figure~\ref{x2approx} we vary the coefficient $x_2$ instead and keep the other coefficients equal to $0$. While for the first coefficient the fit seems almost perfect, for the second coefficient the fit looks worse. This is due to the lower option price. Because in this case $x_1=0$, the options are not as deep in the money and the option price depends more on the tail of the distribution of $W$. To achieve a better fit for these options one would have to increase the training size to include more samples of $W$. One can also observe that the red stars still show some noisy behaviour due to the limited number of only $100,000$ simulations. 
\begin{figure}[t]
    \hfill \subfigure[]{\includegraphics[width=0.45\textwidth]{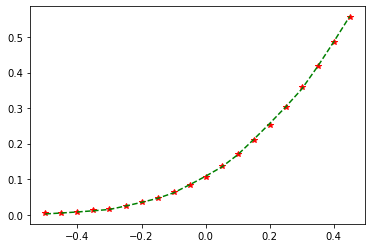}\label{e1approx:Uni}} \hfill\subfigure[]{\includegraphics[width=0.45\textwidth]{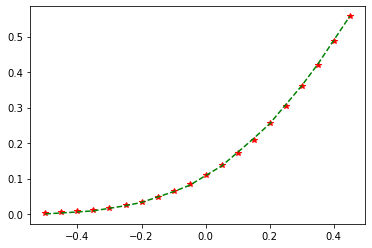}\label{e2approx:Uni}} 
\caption{(1) (a) Option price with varying coefficient $x_1$ for the model with one dimensional noise. (b) Option price with varying coefficient $x_1$ for the model with multi dimensional noise. The green line marks the option price determined by the neural network. The red stars are option prices determined by Monte Carlo simulations.}\label{x1approx}
\end{figure}
\begin{figure}[t]
    \hfill \subfigure[]{\includegraphics[width=0.45\textwidth]{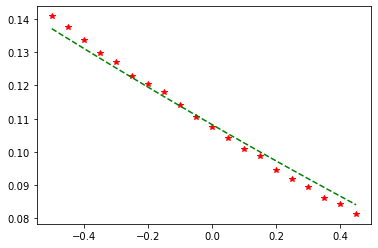}\label{e1approx:Multi}} \hfill\subfigure[]{\includegraphics[width=0.45\textwidth]{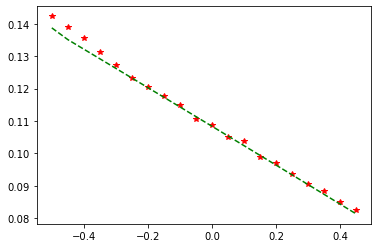}\label{e2approx:Multi}} 
\caption{(1) (a) Option price with varying coefficient $x_2$ for the model with one dimensional noise. (b) Option price with varying coefficient $x_2$ for the model with multi dimensional noise. The green line marks the option price determined by the neural network. The red stars are option prices determined by Monte Carlo simulations.}\label{x2approx}
\end{figure}

We next compare the accuracy of our Hilbert space neural network architecture to the one obtained with a classical neural network with one dimensional activation and where instead of using the basis coefficients as inputs to the network we discretize the initial function $x$ on an equally spaced grid. For this, recall that each sample $x^{(k)}$ represents the function $x^{(k)}(\xi) = \sum_{i=1}^7 x^{(k)}_i e_i (\xi)$. We evaluate the function on a grid of equally spaced points $0=\xi_0 < \dots < \xi_{D}$ and analyze the accuracy with a standard one layer neural network with input dimension $D$ trained with the input-output pairs $(x^{(k)}(\xi_0), \dots , x^{(k)}(\xi_D)),\mathcal{X}(x^{(k)}))$. We consider $D=10$ and $D=20$. We fit neural networks with the number of hidden nodes ranging from $5$ to $150$. For $D=10$ the number of parameters to be fitted for these networks ranges from $60$ ($5$ nodes) to $1,800$ ($150$ nodes) which is comparable to the previous setting where the number of parameters ranged from $63$ ($1$ node) to $1,890$ ($30$ nodes). For $D=10$ the parameters range from $110$ ($5$ nodes) to $3,300$ ($150$ nodes).
We display the resulting mean squared errors in Figure~\ref{mse:classic}. As one can see in the figure, the mean squared error is significantly larger than with the Hilbert space neural network. Averaging over the networks with at least $20$ nodes, the average mean square error is $1.251 \cdot 10^{-3}$ (gridsize $10$) and $1.550\cdot 10^{-3}$ (gridsize $20$) for the one dimensional noise, and $1.075 \cdot 10^{-3}$ (gridsize $10$) and $1.827 \cdot 10^{-3}$ (gridsize $20$) for the multi dimensional noise. 
\begin{figure}[t]
    \hfill \subfigure[]{\includegraphics[width=0.45\textwidth]{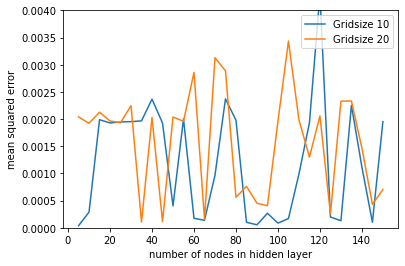}\label{classic:Uni}} \hfill\subfigure[]{\includegraphics[width=0.45\textwidth]{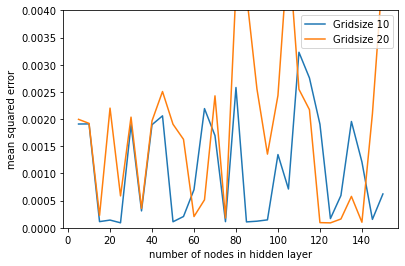}\label{mse:classic:Multi}} 
\caption{Mean squared error for classical neural network architecture with number of neurons ranging from $5$ to $150$ (a) One dimensional noise specification. (b) Multi dimensional noise specification.}\label{mse:classic}
\end{figure}

\subsection*{Run-time} We perform our numerical analysis on a virtual machine provided by Letters \& Science IT at the University of California, Santa Barbara. The virtual machine provides 32 CPU cores with speed $2.1$GHz. The generation of training set and test set is done parallel across all cores by using the python multiprocessing package. The training of the neural networks facilitates the standard parallel computing capability of the TensorFlow library. The generation of the training set takes less than $1$ minute for the specification with one dimensional noise and $27$ minutes for the specification of multi dimensional noise. Generation of the test set requires the most computational resources due to the Monte Carlo simulation performed for each initial curve in the training set resulting in a total of $1$ billion simulations. For the model with one dimensional noise, this takes $1$ hour and $33$ minutes, and for the model with multidimensional noise it takes $43$ hours and $35$ minutes. The reason for the significantly longer run-time of the multi dimensional setting is mainly the discretisation that is required for simulating from $Y(t)$ in this case. In the one dimensional noise setting this is not necessary because the Wiener process is of the particular form $W(t)=B(t)e_1$ where $e_1$ is the function constantly equal to $1$ on which the semi group $S_t$ acts as the identity. The training time of the neural networks is independent of the specification of the noise and takes from $5$ minutes to $24$ minutes where the time increases with the number of parameters of the network. 

In conclusion, run time is not a limiting factor for turning this pricing routine into production for two reasons:
\begin{enumerate}
    \item The computing resources used for this study are limited and comparable to a modern laptop computer. They can be easily scaled up by a financial institution. Moreover, the generation of the training and test set, could be programmed in C++, and be compiled into executable code that does not require interpretation at run time. This leads to a significant speed improvement over the Jupyter notebooks used in this study which are interpreted at run time.
    \item We derive the pricing function automatically for all initial curves in $K_7$. For that reason, it is not necessary to retrain the network when the market changes. In fact, generating a new training and test set, and retraining the network can be done either over night or over the weekend. This is a common practice in banks for pricing routines that are more complex. 
\end{enumerate}


\bibliographystyle{abbrv}
\bibliography{literature}

\begin{thebibliography}{10}

\bibitem{Horvath}
C.~Bayer, B.~Horvath, A.~Muguruza, B.~Stemper, and M.~Tomas.
\newblock On deep calibration of (rough) stochastic volatility models, 2019.

\bibitem{BayerStemper}
C.~Bayer and B.~Stemper.
\newblock Deep calibration of rough stochastic volatility models, 2018.

\bibitem{Beck2021}
C.~Beck, S.~Becker, P.~Grohs, N.~Jaafari, and A.~Jentzen.
\newblock Solving the {K}olmogorov {PDE} by means of deep learning.
\newblock {\em Journal of Scientific Computing}, 88(3), Jul 2021.

\bibitem{beck2021overview}
C.~Beck, M.~Hutzenthaler, A.~Jentzen, and B.~Kuckuck.
\newblock An overview on deep learning-based approximation methods for partial
  differential equations, 2021.

\bibitem{benth2021neural}
F.~E. Benth, N.~Detering, and L.~Galimberti.
\newblock Neural networks in {F}r\'echet spaces, 2021.

\bibitem{BDK-stochastics}
F.~E. Benth, N.~Detering, and P.~Kr\"uhner.
\newblock Stochastic {V}olterra integral equations and a class of first order
  stochastic partial differential equations.
\newblock {\em Stochastics (to appear)}, 2022+.

\bibitem{BDL2021}
F.~E. Benth, N.~Detering, and S.~Lavagnini.
\newblock Accuracy of deep learning in calibrating {HJM} forward curves.
\newblock {\em Digital Finance}, 3(3):209--248, 2021.

\bibitem{BEV-SIFIN}
F.~E. Benth, H.~Eyjolfsson, and A.~Veraart.
\newblock Approximating {L}\'evy semistationary processes via {F}ourier methods
  in the context of power markets.
\newblock {\em SIAM Journal of Financial Mathematics}, 5:71--98, 2014.

\bibitem{BK}
F.~E. Benth and S.~Koekebakker.
\newblock Stochastic modeling of financial electricity contracts.
\newblock {\em Energy Economics}, 30(3):1116--1157, 2008.

\bibitem{BK-CIMS}
F.~E. Benth and P.~Kr\"uhner.
\newblock Representation of infinite dimensional forward price models in
  commodity markets.
\newblock {\em Communications in Mathematics and Statistics}, 2(1):47--106,
  2014.

\bibitem{BK-SIFIN}
F.~E. Benth and P.~Kr\"uhner.
\newblock Representation of infinite dimensional forward price models in
  commodity markets.
\newblock {\em SIAM Journal of Financial Mathematics}, 6:825--869, 2015.

\bibitem{BK-book}
F.~E. Benth and P.~Kr\"uhner.
\newblock {\em The Stochastics of Prices in Commodity and Energy Markets -- an
  Infinite Dimensional View}.
\newblock Book manuscript edition, 2022+.

\bibitem{BSBK-book}
F.~E. Benth, J.~\v{S}altyt\.e Benth, and S.~Koekebakker.
\newblock {\em Stochastic Modelling of Electricity and Related Markets}.
\newblock World Scientific, Singapore, 2008.

\bibitem{bjork}
T.~Bj\"ork.
\newblock {\em Arbitrage Theory in Continuous Time}.
\newblock Oxford University Press, Oxford, 1998.

\bibitem{CS}
L.~Clewlow and C.~Strickland.
\newblock {\em Energy Derivatives -- Pricing and Risk Management}.
\newblock Lacima Publications, London, 2000.

\bibitem{Cybenko1989}
G.~Cybenko.
\newblock Approximation by superpositions of a sigmoidal function.
\newblock {\em Mathematics of Control, Signals and Systems}, 2(4):303--314,
  1989.

\bibitem{Duffie}
D.~Duffie.
\newblock {\em Dynamic Asset Pricing Theory}.
\newblock Princeton University Press, Princeton, 1992.

\bibitem{Filip}
D.~Filipovi\'c.
\newblock {\em Consistency Problems for Heath-Jarrow-Morton Interest Rate
  Models}.
\newblock Springer-Verlag, 2001.

\bibitem{FTT}
D.~Filipovi\'c, S.~Tappe, and J.~Teichmann.
\newblock Jump-diffusions in {H}ilbert spaces: existence, stability and
  numerics.
\newblock {\em Stochastics}, 82:475--520, 2010.

\bibitem{Geman}
H.~Geman.
\newblock {\em Commodities and Commodity Derivatives}.
\newblock John Wiley \& Sons, Chichester, 2005.

\bibitem{Gonon2021}
L.~Gonon and C.~Schwab.
\newblock {Deep ReLU network expression rates for option prices in
  high-dimensional, exponential Lévy models}.
\newblock {\em Finance and Stochastics}, 25(4):615--657, October 2021.

\bibitem{Han8505}
J.~Han, A.~Jentzen, and W.~E.
\newblock Solving high-dimensional partial differential equations using deep
  learning.
\newblock {\em Proceedings of the National Academy of Sciences},
  115(34):8505--8510, 2018.

\bibitem{HJKN}
M.~Hutzenthaler, A.~Jentzen, T.~Kruse, and T.~A. Nguyen.
\newblock A proof that rectified deep neural networks overcome the curse of
  dimensionality in the numerical approximation of semilinear heat equations.
\newblock {\em SN Partial Differential Equations and Applications}, 1:10, 2020.

\bibitem{KS}
I.~Karatzas and S.~E. Shreve.
\newblock {\em Brownian Motion and Stochastic Calculus}.
\newblock Springer-Verlag, 2nd edition, 1991.

\bibitem{PZ}
D.~Peszat and J.~Zabczyk.
\newblock {\em Stochastic Partial Differential Equations with L\'evy Noise}.
\newblock Cambridge University Press, Cambridge, 2007.

\bibitem{RufWang}
J.~Ruf and W.~Wang.
\newblock Neural networks for option pricing and hedging: a literature review.
\newblock {\em Journal of Computational Finance}, 24(1):1--45, 2020.

\bibitem{SkovSkovmand}
J.~B. Skov and D.~Skovmand.
\newblock Dynamic term structure models for {SOFR} futures.
\newblock {\em Journal of Futures Markets}, 41(10):1520--1544, October 2021.

\bibitem{Tappe}
S.~Tappe.
\newblock Some refinements of existence results for {SPDE}s driven by {W}iener
  processes and {P}oisson random measures.
\newblock {\em International Journal of Stochastic Analysis}, 2012:Article
  ID236327, 24 pages, 2012.

\end{thebibliography}

\end{document}